%Praeambel=======================================================================================================================
\documentclass{article}
\usepackage[ansinew]{inputenc}
\usepackage{amsmath,amssymb,amsthm,amscd}
\usepackage{hyperref}
\usepackage{here, caption}
\usepackage{graphicx}
\DeclareMathOperator{\supp}{supp}
\DeclareMathOperator{\id}{id}

\hyphenation{theo-ry}
\hyphenation{asso-cia-ted}
\newtheorem{thm}{Theorem}[section]
\newtheorem{prop}{Proposition}[section]
\newtheorem{lem}{Lemma}[section]

\newtheorem{defi}{Definition}[section]
%===============================================================================================================================

%===============================================================================================================================
\begin{document}
%===============================================================================================================================

%===============================================================================================================================
\title{Quantization of Maxwell's equations on curved backgrounds and general local covariance}
\author{Claudio Dappiaggi\\
  Dipartimento di Fisica Nucleare e Teorica,\\
  Universit\`a degli Studi di Pavia \& INFN Sezione di Pavia,\\
  Via Bassi 6, I-27100 Pavia, Italy\\
  \texttt{claudio.dappiaggi@pv.infn.it} \vspace{.5cm}\\
 Benjamin Lang\footnote{current address: Department of Mathematics, University of York, Heslington, York, UK. YO10 5DD}\\
  II. Institut f\"ur Theoretische Physik\\
  Universit\"at Hamburg\\
  Luruper Chaussee 149, D-22761 Hamburg, Deutschland\\
  \texttt{bl620@york.ac.uk}}
\date{\today}
\maketitle

\begin{abstract}
We develop a quantisation scheme for Maxwell's equations without source on an oriented 4-dimensional globally hyperbolic spacetime with at most finitely many connected components. The field strength tensor is the key dynamical object and it is not assumed a priori that it descends from a vector potential. It is shown that, in general, the associated field algebra can contain a non trivial centre and, on account of this, such a theory cannot be described within the framework of general local covariance unless further restrictive assumptions on the topology of the spacetime are made.
\end{abstract}
%===============================================================================================================================

%===============================================================================================================================
\section{Introduction}
%===============================================================================================================================

Electromagnetic interactions played a key role in the history of physics since they are related to the first successful example of unification of two apparently different fields, the electric and the magnetic one, into a single body, the Faraday tensor $F$. The latter fulfills the so-called Maxwell's equations which, on a flat background, are automatically Poincar\'e invariant and they yield that $F$ can be described in terms of an auxiliary field, the vector potential $A$. Even though $F$ stays the basic observable of the theory, $A$ has the advantage of being apparently easier to handle since every Faraday tensor, solution of Maxwell's equations, can be reconstructed from a vector potential which solves both the wave equation and a second one, known as the Lorenz gauge. With the advent of quantum field theory, this interplay between $A$ and $F$ has been even more emphasized since the quantization scheme which is canonically employed still focuses on the vector potential and considers $F$, also known as the field strength, as a derived object, albeit it is the real observable. In view of the Aharonov-Bohm effect this latter assertion might be considered even erroneous since it is often stated that it is actually $A$ the true physical object. Yet, as noted for example in \cite[\S 2.6]{Sakurai}, in all idealized and real experiments of the Aharonov-Bohm kind the true observable is actually the flux of the magnetic field which is present inside an impenetrable region, typically a cylinder. Hence even this quantity can be expressed in terms of the components of the field strength tensor. The role of the vector potential becomes primary as soon as interactions are switched on, but, till we consider only a free Maxwell system, $F$ should contain all the physical information both at a classical and at a quantum level.

It is far from the scope of this paper to discuss the details of this procedure, but suffice to say that, on Minkowski background and in absence of sources, the result is pretty much satisfactory. Yet the situation starts to complicate itself as soon as it is assumed that the spacetime $M$ has a non trivial geometry. Although we shall provide more details in the main body of the paper, we can easily explain the source of all potential problems. The field strength tensor is best described as a two-form $F\in\Omega^2(M)$ which satisfies Maxwell's equations, which in absence of sources can be expressed as $dF=0$ and $\delta F=0$ where $d$ is the exterior derivative while $\delta$ is the codifferential. It is important to remark that, while the second equation depends on the metric associated to $M$ and hence on the geometry, the first one relies only on the smooth differentiable structure of the background and it is thus a constraint. It is at this stage that the scheme employed on Minkowski background encounters the first difficulties since, if we leave $M$ arbitrary and thus not a priori diffeomorphic to $\mathbb{R}^4$, we cannot apply Poincar\'e lemma to conclude the existence of $A\in\Omega^1(M)$ such that $F=dA$. In other words it is not true that it is always possible to reconstruct all field strengths fulfilling Maxwell's equations, even starting from an auxiliary object such as the vector potential $A$. The consequences of this result of differential geometry has far reaching consequences, since it tells us that, if one wants to quantize Maxwell's equations on a curved background, unless $M$ is somehow special, one cannot use $A$ as the building block, but has to refer uniquely to $F$. An example of a field strength which cannot be derived from the vector potential can be found in \cite{Ash80}. 

The aim of this paper is indeed to develop a quantization scheme for the field strength on an arbitrary four dimensional globally hyperbolic spacetime within the framework of the algebraic formulation of quantum field theory -- see \cite{Bongaarts} for an earlier investigation in this language. This is certainly not the first investigation in this direction and preliminary works are present in \cite{Kusku} and, as a special case of a much broader analysis, in \cite{Hollands}. Compared to these earlier publications, we rectify some minor problems mostly in the analysis of the classical dynamical system, but our main contribution will be related to the construction of a field algebra of observables for the field strength. In this endeavour we will prove in particular that the commutator between two generators of the algebra is given by the Lichnerowicz propagator \cite{Lich} regardless of the chosen spacetime. This allows us also to make a direct connection with an old result of Ashtekar and Sen \cite{Ash80}, who identified the existence of a two-parameter family of unitary inequivalent representations of the canonical commutation relations for the field strength on Schwarzschild spacetime. In our language this translates in the existence of a non trivial centre for the field algebra whenever the second de Rham cohomology group of the manifold, either with real or with complex coefficients, is non trivial. 

As a last point we can address the question whether the field strength tensor can be described as a locally covariant quantum field theory. First introduced in \cite{BFV}, the so-called principle of general local covariance was formulated leading to the realization of a quantum field theory as a covariant functor between the category of globally hyperbolic (four-dimensional) Lorentzian manifolds with isometric embeddings as morphisms and the category of $^*$-algebras with injective homomorphisms as morphisms. Already shown to hold true for scalars \cite{BFV}, spinors \cite{Sanders} and recently for the Proca field and for the vector potential (though in this case it has been assumed that the first de Rham cohomology group of the underlying background is trivial) \cite{Dappiaggi}, such paradigm turns out not to hold true in the case of a quantum field theory based on the field strength. Although we will be more explicit in the main body of the paper, we stress that the obstruction is related to a potential clash between the presence of a non trivial centre for the field algebra of $F$ in a globally hyperbolic spacetime $M$ and the existence of isometric embeddings of $M$ into backgrounds $M'$ with trivial second de Rham cohomology group. As a potential way out, we show that general local covariance can be restored if the category of admissible spacetimes is suitably reduced, although, as we shall comment later in detail, this has certainly far reaching physical consequences.

\vskip .2cm

The paper will be organized as follows: In section 1.1 we will recollect the notations and conventions we shall use throughout the paper. In section 2 we will instead discuss Maxwell's equations and the associated initial value problem, showing that it is well-defined and that the space of solutions can be constructed also in this case with the help of the causal propagator for a suitable second order hyperbolic differential operator. Section 3 will be instead entirely devoted to the issue of constructing the associated field algebra and, in particular, we shall prove that the commutator between two generators can be computed via the Lichnerowicz propagator. In section 3.1 we shall show that, whenever certain topological invariants of the background are not trivial, the field algebra possesses a non trivial centre and we fully characterize its elements also providing explicit examples. In section 3.2 we tackle the problem whether the principle of general local covariance holds true for the field strength finding in general a negative answer unless the class of admissible spacetimes is reduced. In section 4 we draw some conclusions. 

\subsection{Basic definitions and Conventions}\label{notations}

In this paper, each background will always be referred to as a {\em ``spacetime''}, that is a four dimensional differentiable, second countable, Hausdorff manifold $M$ with a Lorentzian metric $g$ whose signature is $(+,-,-,-)$. We shall also assume that $M$ is {\em globally hyperbolic}, hence there exists a closed achronal subset $\Sigma\subset M$ whose domain of dependence coincides with $M$ itself. Note, that a spacetime is usually assumed to be connected in the literature. Yet, in this paper and with respect to the construction of the field algebra in particular, we are also interested in disconnected manifolds with finitely many connected components. See in particular the remark after proposition \ref{universal algebra properties}. Note, that there is no ambiguity in carrying over established results from the connected case to the disconnected case with finitely many connected components. On account of standard results in differential geometry and of the recent analysis in \cite{Bernal, Bernal2} and of theorem 1.1 in \cite{Bernal3}, global hyperbolicity entails that there exists an isometry $\psi$ between $M$ and a smooth product manifold $\mathbb{R}\times\Sigma$. Thus $\Sigma$ turns out to be a three-dimensional embedded submanifold and theorem 2.1 in \cite{Bernal} guarantees, moreover, that $(\psi^{-1})^*g$ splits as $\beta d\mathcal{T}^2-h$ where $\mathcal{T}:\mathbb{R}\times\Sigma\to\mathbb{R}$ is a temporal function, $\beta\in C^\infty(\mathbb{R}\times\Sigma,(0,\infty))$ while $h$ induces for fixed values of $\mathcal{T}$ a smooth Riemannian metric on $\Sigma$. Furthermore global hyperbolicity yields that $M$ admits an orientation and thus, henceforth, we assume that a choice has been done and all spacetimes are globally hyperbolic as well as time oriented and oriented.

On top of the geometric structure we shall consider $\Omega^p(M,\mathbb{K})$ and $\Omega^p_0(M,\mathbb{K})$, respectively the set of smooth and of smooth and compactly supported $p$-forms on $M$ with values in the field $\mathbb{K}$, usually chosen either as $\mathbb{R}$ or $\mathbb{C}$. Here $p\geq 0$ and $\Omega^0_{(0)}(M,\mathbb{K}):= C^\infty_{(0)}(M,\mathbb{K})$, where the parenthesis around the subscript indicates that the statement holds true both with and without the subscript itself. Let $\mathbb{K}$ be the complex numbers; then, on these spaces, one can define two natural operators, the external derivative $d:\Omega^p_{(0)}(M,\mathbb{C})\to\Omega^{(p+1)}_{(0)}(M,\mathbb{C})$ and the Hodge dual $*:\Omega^p_{(0)}(M,\mathbb{C})\to\Omega^{(4-p)}_{(0)}(M,\mathbb{C})$. Notice that, while $d$ is completely independent from $g$, $*$ is instead a function of the underlying metric. Furthermore, since $*$ is invertible, we can introduce a third operator, known as the codifferential $\delta:=(-1)^p*^{-1}d*:\Omega^{p}_{(0)}(M,\mathbb{C})\to\Omega^{(p-1)}_{(0)}(M,\mathbb{C})$. 

In the main body of the paper we will be often interested in compactly supported smooth forms which are either closed or coclosed and to avoid to be redundant in the exposition we introduce the following novel notation:
\begin{eqnarray*}
\Omega^p_{0,\delta}(M,\mathbb{C}):=\left\{\omega\in\Omega^p_0(M,\mathbb{C})\;|\;\;\delta\omega = 0\right\},\\
\Omega^p_{0,d}(M,\mathbb{C}):=\left\{\omega\in\Omega^p_0(M,\mathbb{C})\;|\;\;d\omega = 0\right\}.
\end{eqnarray*}
To conclude, we mention two further ingredients we shall need in the forthcoming discussion. The first is $H^p(M,\mathbb{C})$ which is the $p$-th de Rham cohomology group of $M$ -- see \cite{BoTu} for the definition and for a recollection of the main properties. It is noteworthy that, since such groups are built only out of the external derivative, they are completely independent from the underlying geometry and from $g$ in particular. We can combine together $d$ and $\delta$ to define the Laplace-de Rham operator $\square:= -\left(d\delta+\delta d\right)$. The second ingredient is instead $H_p^\infty(M)$ which stands for the $p$-th smooth singular homology group of the manifold and whose main properties are discussed in \cite{Lee}.

%===============================================================================================================================
%The Cauchyproblem==============================================================
%===============================================================================================================================
\section{Maxwell's equations on curved spacetimes}
As stated in the introduction, the main objective of this paper is to shed some light on the classical and on the quantum structure of Maxwell's equations on curved backgrounds, emphasizing in particular how the underlying topology affects the qualitative behaviour of the system. To start with, we need to introduce the key objects of our analysis: The curved spacetime analogue of Maxwell's equations sees $F\in\Omega^2(M)$ as the dynamical variable and the dynamics is ruled by
\begin{equation}\label{Maxwell}
dF=0,\qquad-\delta F= j,
\end{equation}
where $j\in\Omega^1(M)$ is the external current such that $\delta j=0$. A key property of \eqref{Maxwell} when defined on a generic spacetime $(M,g)$ lies in the analysis of the first identity. This is a constraint on the form of $F$ which usually leads to state both that there exists $A\in\Omega^1(M)$ such that $F=dA$ and that one can consider $A$, the so-called vector potential as the underlying dynamical field. This statement is based on the Poincar\'e lemma which, alas, cannot be always applied since it fails to hold true whenever $H^2(M)$ is not trivial. In this particular case, it turns out that there exist classical field strengths which cannot be derived as the external derivative of a suitable one-form. Since, from a physical point of view, it is $F$ the observable field of the dynamical system, it is natural to wonder whether a full classical and quantum analysis of \eqref{Maxwell} could be performed without making use of any auxiliary structure such as the vector potential. 

In order to grasp the classical behaviour of a dynamical system ruled by \eqref{Maxwell}, we need to prove that this set of equations admits a well-defined initial value problem on every globally hyperbolic spacetime. Despite the apparent obviousness of this question, to the best of our knowledge it turns out that this problem has been only partly treated in details and the discussions available in the literature are either partly incomplete or based upon further restrictive assumptions, such as the compactness of the Cauchy surface $\Sigma$ -- see \cite{Kusku, Hollands}, but also \cite{Dimock, Fewster} although they work with the vector potential. On the opposite, since we want to cope with the most general scenario, we need the following statement -- see also \cite{Pfenning} for a similar analysis:
\begin{prop}\label{Cauchy problem field strength tensor}
Let \((M, g)\) be an oriented 4-dimensional globally hyperbolic spacetime with at most finitely many connected components whose smooth spacelike Cauchy surface is $\Sigma$ with smooth embedding $\iota:\Sigma\hookrightarrow M$. Then, for each triple $(j, E, B)$ such that \(j\in\Omega^1_{0,\delta}(M, \mathbb{C})\), \(E\in\Omega^1_0(\Sigma, \mathbb{C})\) with \(-\delta E=*_{(3)}\iota^{pb}*j\) and \(B\in\Omega^2_{0,d}(\Sigma, \mathbb{C})\), there exists a unique solution \(F\in\Omega^2(M, \mathbb{C})\) of the initial value problem\footnote{Notice that, in order to avoid a potential confusion in the notation, in this section, we refer to the pull-back of $\iota$ as $\iota^{pb}$ in place of $\iota^*$. Furthermore we indicate with $*_{(3)}$ the Hodge dual induced on the Cauchy surface $\Sigma$ to distinguish it from the one on $M$.}
\begin{equation}\label{MaxCauchy}\left\{\begin{array}{ll}
dF=0,&
-\delta F= j,\\
-*_{(3)}\iota^{pb}*F=E, &
-\iota^{pb}F=B.\end{array}
\right. 
\end{equation}
Furthermore, \(F\) depends linearly and continuously on both the source term \(j\), and on the initial data \(E\), \(B\). Each solution also enjoys the following support property:
\begin{align*}
\supp(F)\subseteq J^+\big(X\big)\cup J^-\big(X\big),
\end{align*}
where $J^\pm(X)$ are the causal future and past of $X:= \supp(j)\cup\supp(E)\cup\supp(B)$ respectively.
\end{prop}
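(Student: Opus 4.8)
The plan is to recast the first-order system \eqref{Maxwell} as a Cauchy problem for the Laplace--de Rham operator $\square=-(d\delta+\delta d)$ on $2$-forms, which is normally hyperbolic — its principal symbol is the metric, as one sees from the Weitzenb\"ock identity — and hence, on a globally hyperbolic spacetime, Green-hyperbolic with a well-posed initial value problem. First I would record that $d$ and $\delta$ commute with $\square$ and that every solution of \eqref{Maxwell} solves the inhomogeneous wave equation $\square F=dj$: from $dF=0$ one gets $\delta dF=0$, from $-\delta F=j$ one gets $d\delta F=-dj$, and therefore $\square F=-(d\delta+\delta d)F=dj$.

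Next I would use the isometric splitting $M\cong\mathbb{R}\times\Sigma$ recalled in Section~\ref{notations} to decompose a $2$-form near $\Sigma$ into a spatial ``magnetic'' part, recovered by $\iota^{pb}F$, and an ``electric'' part, recovered by $*_{(3)}\iota^{pb}*F$. These account for all six components of the restriction of $F$ to $\Sigma$, so the pair $(E,B)$ fixes that restriction completely. Reading the two equations in \eqref{Maxwell} in this splitting, the components tangent to $\Sigma$ carry no transverse derivative and are therefore constraints on the data; I expect a direct computation to show that they are exactly the compatibility relations $dB=0$ (equivalently $B\in\Omega^2_{0,d}(\Sigma,\mathbb{C})$) and $-\delta E=*_{(3)}\iota^{pb}*j$ posited in the statement. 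The remaining, transverse components are evolution equations and fix the normal derivative of $F$ on $\Sigma$ in terms of $(E,B,j)$. In this way the triple $(E,B,j)$ produces a complete and mutually consistent set of Cauchy data for $\square$ on $\Sigma$.

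With the data at hand I would invoke the standard existence-and-uniqueness theorem for the Cauchy problem of a normally hyperbolic operator on a globally hyperbolic spacetime. It delivers a unique $F\in\Omega^2(M,\mathbb{C})$ solving $\square F=dj$ with the prescribed data, depending linearly and continuously on $(j,E,B)$, and enjoying finite propagation speed. Since $\supp(dj)\subseteq\supp(j)$ and the initial data are supported in $\supp(E)\cup\supp(B)$, finite propagation speed yields $\supp(F)\subseteq J^+(X)\cup J^-(X)$ with $X=\supp(j)\cup\supp(E)\cup\supp(B)$. Uniqueness for \eqref{MaxCauchy} then follows as well, since two Maxwell solutions sharing the data $(j,E,B)$ induce the same Cauchy data for $\square$.

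What is left, and what I regard as the heart of the argument, is \emph{constraint propagation}: the solution of $\square F=dj$ must be shown to satisfy \eqref{Maxwell} on all of $M$, not merely on $\Sigma$. Setting $C_1:=dF$ and $C_2:=\delta F+j$, the commutation of $d,\delta$ with $\square$ together with $d^2=\delta^2=0$ and $\delta j=0$ gives $\square C_1=d(dj)=0$ and $\square C_2=\delta(dj)+\square j=0$ (using $\square j=-\delta dj$). Rather than track both Cauchy data of these quantities, I would pass to the first-order structure: writing $\Phi:=C_1+C_2$ in the full exterior algebra, the identities $dC_1=0$, $\delta C_2=0$ and $\delta C_1+dC_2=-\square F+dj=0$ collapse to $(d+\delta)\Phi=0$. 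The operator $d+\delta$ is first order with $(d+\delta)^2=-\square$, hence Green-hyperbolic, and for such an operator the Cauchy problem is well posed with the single datum $\Phi|_\Sigma$. This restriction vanishes precisely because the Cauchy data were arranged so that \eqref{Maxwell} holds on $\Sigma$, whence $\Phi\equiv0$ and, comparing degrees, $C_1\equiv0$ and $C_2\equiv0$; that is, $dF=0$ and $-\delta F=j$ throughout $M$. The main obstacle is thus concentrated in the bookkeeping of the second step — checking that the tangential parts of \eqref{Maxwell} reduce exactly to the stated compatibility conditions and that the transverse parts consistently determine the normal derivative of $F$ — since it is there that the hypotheses $\delta j=0$ and $-\delta E=*_{(3)}\iota^{pb}*j$ are indispensable.
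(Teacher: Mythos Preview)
Your outline is sound and tracks the paper's argument closely: reduce Maxwell to the inhomogeneous wave equation $\square F=dj$, manufacture Cauchy data $(\mathcal{F},\Pi)$ on $\Sigma$ from $(E,B,j)$ so that the Maxwell equations hold on $\Sigma$, solve the normally hyperbolic problem, and then propagate the constraints $dF$ and $\delta F+j$ off the initial surface. The support estimate and linearity/continuity come from the standard theory for $\square$, exactly as you say, and your uniqueness argument agrees with the paper's.

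The one place you genuinely diverge is constraint propagation. The paper treats $dF$ and $\delta F+j$ as separate solutions of homogeneous wave equations and therefore must verify \emph{both} their restriction and their normal derivative vanish on $\Sigma$; the normal-derivative checks are the delicate part and involve the wave equation, the commutator $[\nabla_\mu,\nabla_\nu]$, symmetries of the Riemann tensor, and the conservation $\delta j=0$ (the paper in fact defers these computations to \cite{Lang}). Your packaging $\Phi=C_1+C_2$ with $(d+\delta)\Phi=0$ is a clean shortcut: since $d+\delta$ is Dirac-type with $(d+\delta)^2=-\square$, its Cauchy problem needs only $\Phi|_\Sigma$, and the invertibility of the normal symbol lets you recover $\nabla_{\mathfrak n}\Phi|_\Sigma=0$ from $\Phi|_\Sigma=0$ automatically. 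This buys you exactly the curvature-laden step the paper has to do by hand. What it costs is that you are implicitly invoking well-posedness for a first-order Green-hyperbolic operator on mixed-degree forms, which is true but a slightly heavier piece of machinery than the paper's direct appeal to \cite[Cor.~3.2.4]{BGP} for $\square$; you may want to justify it by the reduction just sketched. Two minor omissions: you do not mention the disconnected case (the paper dispatches it with a partition of unity subordinate to the components), and the ``bookkeeping'' you defer---that the transverse parts of Maxwell determine $\Pi$ and the tangential parts are exactly $dB=0$ and $-\delta E=*_{(3)}\iota^{pb}*j$---is precisely what guarantees $\Phi|_\Sigma=0$, so it cannot be skipped even in your streamlined version.
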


\begin{proof}
We start with the connected case. Since $\Box=-(d\delta+\delta d)$, one can realize by direct inspection that every solution of \eqref{MaxCauchy} also solves  $\Box F=dj$. Yet, in order to use the latter as a starting point to solve Maxwell's equations, we need to prove that we can always select suitable initial data for the wave equation so that a solution of the latter yields also one of \eqref{MaxCauchy}. To this avail, let us consider $\mathcal{F},\Pi\in\iota^{pb}\Omega^2_0(M, \mathbb{C})$ where $\iota^{pb}$ here is the pull-back induced from $\iota:\Sigma\hookrightarrow M$ on the compactly supported sections of the exterior bundle on $M$. In other words $\mathcal{F}$ and $\Pi$ are maps from $\Sigma$ into $\Omega^2_0(M,\mathbb{C})$ such that  
\begin{align*}
\left.\mathcal{F}\right|_{V\cap\Sigma}&=n_0E_jd\phi^0\wedge d\phi^j-\frac{1}{2}B_{ij}d\phi^i\wedge d\phi^j,\\
\left.\Pi\right|_{V\cap\Sigma}&=n^0\nabla\!_i\mathcal{F}_{0j}d\phi^i\wedge d\phi^j+n_0(j_k-g^{ij}\nabla\!_i\mathcal{F}_{jk})d\phi^0\wedge d\phi^k.
\end{align*}
On account of $M$ being isometric to $\mathbb{R}\times\Sigma$ with line element $ds^2=\beta d\mathcal{T}^2-h$ as outlined in section \ref{notations}, here $V$ is a coordinate patch of $M$ adapted to this last metric. It intersects $\Sigma$ on a non empty open set and it is endowed with a local chart $\phi$, whereas $n_\mu$ is the unit normal vector to $\Sigma$. Hence, the Cauchy problem
\begin{equation*}\left\{\begin{array}{ll}
\Box F =dj, & \\
\left.F\right|_\Sigma=\mathcal{F}, & \left.\nabla_\mathfrak{n}F\right|_\Sigma=\Pi\end{array}\right. ,
\end{equation*}
where both $\mathcal{F}$ and $\Pi$ are chosen as in the previously displayed set of two equations, admits a unique solution $F\in\Omega^2(M,\mathbb{C})$ which, furthermore, on account of \cite[Thm.3.2.11]{BGP}, depends linearly and continuously  both on the source term and on the initial data \(\mathcal{F}\), \(\Pi\). At the same time it holds \(\supp(F)\subset J^+\big(X\big)\cup J^-\big(X\big)\) where $X:=\supp(dj)\cup\supp(\mathcal{F})\cup\supp(\Pi)$, which, in turn, entails the sought support property.  It remains to be shown that the obtained solution \(F\) of the Cauchy problem for the wave equation solves \eqref{MaxCauchy} as well. To achieve this, it suffices to show that $F$ also satisfies 
\begin{align*}
\Box dF=0, & & \Box(-\delta F+j)=0,
\end{align*}
with vanishing initial data. Since $[\Box, d]=[\Box,\delta]=0$, the two equations automatically descend from $\Box F=dj$ and thus only the initial data have to be checked.  It suffices to show it in \((V, \phi)\). From \(dB=0\) and \(-\iota^{pb}F=B\) it follows directly \((\nabla\!_kF_{ij}+\nabla\!_jF_{ki}+\left.\nabla\!_iF_{jk})\right|_{V\cap\Sigma}=0\), whereas, from \(\left.\nabla\!_\mathfrak{n}F\right|_{V\cap\Sigma}=\left.\Pi\right|_{V\cap\Sigma}\), it descends \((\nabla\!_0F_{ij}+\nabla\!_jF_{0i}+\nabla\!_i\left.F_{j0})\right|_{V\cap\Sigma}=0\); hence \(\left.dF\right|_{V\cap\Sigma}=0\). Equivalently \(\nabla\!_\mathfrak{n}\left.F\right|_{V\cap\Sigma}=\left.\Pi\right|_{V\cap\Sigma}\) yields \((n^\nu\nabla^\mu F_{\mu\nu}-\left.n^\nu j_\nu)\right|_{V\cap\Sigma}=0\). Notice that \((\nabla^\mu \left.F_{\mu 0}-j_0)\right|_{V\cap\Sigma}=0\) is a by-product of both \(-\delta E=*_{(3)}\iota^{pb}*j\) and \(-*_{(3)}\iota^{pb}*F=E\); thus \(\left.(-\delta F-j)\right|_{V\cap\Sigma}=0\). The remaining initial condition \(\nabla\!_\mathfrak{n}\left.dF\right|_{V\cap\Sigma}=0\) arises out of \(\Box F=dj\), of the properties of \([\nabla_\mu, \nabla\!_\nu]\) and of the symmetries of the Riemannian curvature tensor -- see \cite{Lang}. Hence, on account \cite[Cor.3.2.4]{BGP}, this suffices for \(dF=0\) to hold true on  \(M\). To conclude, \(\big(\nabla\!_\mathfrak{n}\left.(-\delta F-j)\big)\right|_{V\cap\Sigma}=0\) is a result of \(dF=0\), \(\Box F=dj\) and of the conservation of the current \(\delta j=0\). As above this suffices to prove \(-\delta F=j\) on $M$. To conclude, in order to establish that the solution of \eqref{MaxCauchy}, it suffices to suppose that there exists two different solutions, say $F$ and $F'$. Their difference $\widetilde{F}\doteq F-F'$ must satisfy $d\widetilde{F}=0$ and $\delta\widetilde{F}=0$ with vanishing initial data on the Cauchy surface $\Sigma$. This entails that $\widetilde{F}$ must also satisfy $\Box\widetilde{F}=0$ with vanishing inital data on $\Sigma$ and, according to standard results of the theory of partial differential equations, this holds true only if $\widetilde{F}=0$, hence $F=F'$. If $M$ is disconnected with finitely many connected components $\Gamma_1,\dots\Gamma_n$, $n\in\mathbb{N}$, we consider the partition of unity $\{\chi^i\}_{i=1,\dots, n}$ subordinated to $\{\Gamma_i\}_{i=1,\dots, n}$ such that $\chi^i\bigl|_{\Gamma_i}=1$ and $\chi^i\bigl|_{\Gamma_j}=0$ for $i\not=j$. Since $(M,g)$ is globally hyperbolic, so is $(\Gamma_i, g\bigl|_{\Gamma_i})$, with smooth spacelike Cauchy surface $\Sigma_i=\Sigma\cap\Gamma_i$; we obtain, therefore, the initial value problem
\begin{equation*}\left\{\begin{array}{ll}
dF_i=0,&
-\delta F_i= \chi^ij,\\
-*_{(3)}\iota^{pb}*F_i=\chi^iE, &
-\iota^{pb}F_i=\chi^iB.\end{array}
\right.
\end{equation*}
We solve each of these initial value problems as prescribed in the connected case, therefore obtaining unique solutions $F_i$. The unique solution $F$ is then assembled via the partition of unity, $F=\sum_{i=1}^n\chi^iF_i$.
\end{proof}

As a by-product of this last proposition, we can construct the solutions of Maxwell's equations on a globally hyperbolic spacetime starting from the wave equation. If we focus on the source free case, that is $j=0$, we can generate all solutions of $\Box F=0$ with compactly supported initial data as $F=G\omega$ where $\omega\in\Omega^2_0(M, \mathbb{C})$ and where $G:= G^+-G^-$ is the causal propagator \cite{BGP}. Here $G^\pm:\Omega^2_0(M, \mathbb{C})\to\Omega^2(M, \mathbb{C})$ are the uniquely defined advanced and retarded Green operators such that $G^\pm\circ\Box=\Box\circ G^\pm=id_{\Omega^2_0(M, \mathbb{C})}$ and $\supp\left(G^\pm(\omega)\right)\subseteq J^\pm\left(\supp(\omega)\right)$, for all $\omega\in\Omega^2_0(M, \mathbb{C})$. Notice that these properties of $G^\pm$ also entail that every compactly supported smooth solution of $\Box F=0$ must vanish identically. An additional noteworthy property of the causal propagator associated to the Laplace-de Rham operator originates from its structure and from the fact that $\square$ intertwines between the codifferential operator $\delta$ acting on $p$ and on $(p-1)$-forms, that is $\square\circ\delta=\delta\circ\square$. To wit, at a level of solutions of the corresponding wave equation with smooth and compactly supported initial data, this entails that, if $F=G(\omega)$ with $\omega\in\Omega^2_0(M, \mathbb{C})$, then $\delta F= \delta G(\omega)=G(\delta\omega)$. The very same properties hold with respect to the exterior derivative $d$. 

Since not all $G(\omega)$ fulfil also the source free Maxwell's equations, one needs to impose some further constraints on the set of initial test functions $\omega$ in order to take into account only the two-forms solving \eqref{MaxCauchy}. The following proposition amends this deficiency:      

\begin{prop}
A smooth complex $2$-form $F$ is a solution of \eqref{MaxCauchy} with $j=0$ and with compactly supported smooth initial data if and only if there exist \(\alpha\in\Omega^3_{0,d}(M, \mathbb{C})\) and \(\beta\in\Omega^1_{0,\delta}(M, \mathbb{C})\) such that $F=G(\delta\alpha+d\beta)$.
\end{prop}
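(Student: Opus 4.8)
The plan is to reduce the entire statement to algebraic manipulations of compactly supported test forms via the representation $F=G(\omega)$ with $\omega\in\Omega^2_0(M,\mathbb{C})$ furnished by the by-product of Proposition~\ref{Cauchy problem field strength tensor}, exploiting three structural facts already recorded above: that $G\circ\Box=\Box\circ G=0$, that $\Box$ commutes with $d$ and $\delta$ so that $dG=Gd$ and $\delta G=G\delta$, and that every compactly supported smooth solution of $\Box(\cdot)=0$ vanishes identically. Throughout, $G$ and $\Box$ are understood to act degreewise.

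For the ``if'' direction I would take $\alpha\in\Omega^3_{0,d}(M,\mathbb{C})$ and $\beta\in\Omega^1_{0,\delta}(M,\mathbb{C})$, set $\omega:=\delta\alpha+d\beta\in\Omega^2_0(M,\mathbb{C})$ and define $F:=G(\omega)$. Then $\Box F=0$ holds automatically, and the support property $\supp(G\omega)\subseteq J^+(\supp\omega)\cup J^-(\supp\omega)$ together with global hyperbolicity guarantees that the induced initial data on $\Sigma$ are compactly supported, so it only remains to verify the two Maxwell constraints. Since $d\alpha=0$ forces $\Box\alpha=-(d\delta+\delta d)\alpha=-d\delta\alpha$, one finds $dF=G(d\delta\alpha)=-G(\Box\alpha)=0$; symmetrically $\delta\beta=0$ gives $\delta d\beta=-\Box\beta$ and hence $\delta F=G(\delta d\beta)=-G(\Box\beta)=0$. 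Thus $F$ solves \eqref{MaxCauchy} with $j=0$.

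For the converse I would begin with a solution $F$ of \eqref{MaxCauchy} with $j=0$ and compactly supported initial data. As $dF=\delta F=0$ we have $\Box F=0$, so the by-product again yields $\omega\in\Omega^2_0(M,\mathbb{C})$ with $F=G(\omega)$. Now $0=dF=G(d\omega)$ and $0=\delta F=G(\delta\omega)$, and the decisive input is the exactness of the fundamental sequence of the normally hyperbolic operator $\Box$ on the globally hyperbolic $M$ (cf.\ \cite{BGP}), which identifies $\ker G$ on compactly supported forms with the image of $\Box$; hence there exist $\eta\in\Omega^3_0(M,\mathbb{C})$ and $\xi\in\Omega^1_0(M,\mathbb{C})$ with $d\omega=\Box\eta$ and $\delta\omega=\Box\xi$. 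Applying $d$ (resp.\ $\delta$) and using that a compactly supported $\Box$-solution vanishes gives $d\eta=0$ and $\delta\xi=0$, i.e.\ $\eta\in\Omega^3_{0,d}(M,\mathbb{C})$ and $\xi\in\Omega^1_{0,\delta}(M,\mathbb{C})$. The commutation relations then yield $\Box(\omega+d\xi+\delta\eta)=\Box\omega+d\Box\xi+\delta\Box\eta=\Box\omega+d\delta\omega+\delta d\omega=0$, because $\Box\omega=-(d\delta+\delta d)\omega$; as this combination is compactly supported it vanishes, so $\omega=-(d\xi+\delta\eta)$. Setting $\alpha:=-\eta$ and $\beta:=-\xi$ produces $F=G(\delta\alpha+d\beta)$ with the required closure and coclosure.

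The analytic heart of the argument, and the only genuinely non-formal step, is the identification of $\ker G$ with the image of $\Box$ on compactly supported forms, which rests on $\Box$ being normally hyperbolic and on the global hyperbolicity of $M$; everything else is bookkeeping with $d$, $\delta$, $\Box$ and $G$. The remaining care needed is to keep track of the degree in which $G$ and $\Box$ act and, as in Proposition~\ref{Cauchy problem field strength tensor}, to handle a disconnected $M$ componentwise through a subordinate partition of unity, which poses no additional difficulty.
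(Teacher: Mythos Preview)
Your proof is correct and follows essentially the same route as the paper's own argument: both directions rest on writing $F=G(\omega)$, using $\ker G=\mathrm{im}\,\Box$ on compactly supported forms to produce potentials for $d\omega$ and $\delta\omega$, deducing their (co)closedness from the vanishing of compactly supported $\Box$-solutions, and then recovering $\omega=-(\delta\alpha+d\beta)$ from $\Box\omega=-d\delta\omega-\delta d\omega$. The only cosmetic differences are that you name the intermediate potentials $\eta,\xi$ before relabelling, and that you invoke the exact sequence from \cite{BGP} explicitly where the paper leaves this implicit.
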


\begin{proof}
``\(\Longleftarrow\)'' Since \(\alpha\) and \(\beta\) are of compact support and since G commutes with \(d\) and \(\delta\), it holds that $dF=G(d\delta\alpha)=-G(\Box\alpha)=0$ and that $\delta F=G(\delta d\beta)=-G(\Box\beta)=0$. Furthermore, on account of the properties of the causal propagator, it is also guaranteed that the initial data of Maxwell's equations associated to $G(d\alpha+\delta\beta)$ are smooth, compactly supported and their form fulfils the constraints of \eqref{MaxCauchy}.\\
``\(\Longrightarrow\)'' Since \(dF=0\) and \(\delta F=0\) entail \(\Box F=0\), there must exist \(\omega\in\Omega^2_0(M, \mathbb{C})\) such that \(F=G\omega\). Furthermore, \(dF=dG\omega=Gd\omega=0\) and \(\delta F=\delta G\omega=G\delta\omega=0\) entail the existence of \(\alpha\in\Omega^3_{0,d}(M, \mathbb{C})\) and \(\beta\in\Omega^1_{0,\delta}(M, \mathbb{C})\) satisfying \(d\omega=\Box\alpha\) and \(\delta\omega=\Box\beta\). On account of the nilpotency of both $d$ and $\delta$, it holds $\Box d\alpha=0$ and $\Box\delta\beta=0$ which suffices to conclude that $d\alpha=\delta\beta=0$, $\alpha$ and $\beta$ being compactly supported.
The same reasoning entails that the following chain of identities \(\Box\omega=-d\delta\omega-\delta d\omega=-\Box d\beta-\Box\delta\alpha\) yields $\omega=-\delta\alpha-d\beta$, up to an irrelevant sign the sought result.
\end{proof}

%===============================================================================================================================
\section{Quantisation of the field strength tensor}
%===============================================================================================================================
The full control of the classical dynamics of Maxwell's equations allows us to address the problem of quantising a field theory with $F$ as the main ingredient. As it is customary in the algebraic approach, this is a two-step procedure, the first calling for the identification of a suitable algebra of observables and the second requiring the assignment of a state to represent such an algebra in terms of operators on a suitable Hilbert space. In this paper we will focus on the first part of the programme, hence we shall construct the full field algebra and investigate its properties. In the process we will benefit from ideas which first appeared in earlier works \cite{Fredenhagen, Fredenhagen2} and \cite{Hollands}; the sketch of the construction is the following: First we consider a suitable covering of $(M,g)$ in globally hyperbolic submanifolds \((M_i, g_i)\), \(i\in I\) where $I$ is a set, which we will specify below. Afterwards, we construct the local field algebras \(\mathfrak{F}(M_i)\) of the field strength tensor, whereas the global one is defined as the universal algebra \(\mathfrak{F}_u(M)\) associated to the local algebras \(\mathfrak{F}(M_i)\). The commutation relations encoded in \(\mathfrak{F}_u(M)\) will turn out to be given by the Lichnerowicz's commutator and the algebra itself will have all the properties required to deserve the name ``\emph{global}" field algebra. One could wonder why it is necessary to go through such an involute construction. There are many conceptual reasons but it is noteworthy that the form of the commutator is in principle only known for spacetimes with certain topological restrictions and thus we need to show that a generalization to a more generic spacetime exists. 

\subsection{The universal algebra}
\begin{thm}\label{universal algebra}
Let $I$ be a set and $(A_i)_{i\in I}$ a family of unital $*$-algebras together with linking unital $*$-homomorphisms $\alpha_{ij}:A_i\longrightarrow A_j$ where the admissible pairs $(i,j)$ are a suitable subset of $I\times I$. Furthermore the compatibility condition $\alpha_{jk}\circ\alpha_{ij}=\alpha_{ik}$ holds true whenever $\alpha_{jk}$ and $\alpha_{ij}$ are defined. Then, there exists a unique (up to $*$-isomorphism) unital $*$-algebra $A_u$ together with a family of unital $*$-homomorphisms $(\alpha_i:A_i\longrightarrow A_u)_{i\in I}$ such that the following universal property holds true:
\begin{description}
\item[(UVA)] For each unital $*$-algebra $B$ and for each family $(\phi_i:A_i\longrightarrow B)_{i\in I}$ of unital $*$-homomorphisms such that $\phi_j\circ\alpha_{ij}=\phi_i$ whenever $\alpha_{ij}$ exists, there exists a unique unital $*$-homomorphism $\Phi_u:A_u\longrightarrow B$ which satisfies $\Phi_u\circ\alpha_i=\phi_i$ for all $i\in I$.
\end{description}
The pair $\big(A_u, (\alpha_i)_{i\in I}\big)$ is called  the {\bf universal algebra} of $\big((A_i)_{i\in I}, (\alpha_{ij})_{(i,j)}\big)$. 
\end{thm}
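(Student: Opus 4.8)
The plan is to recognize the pair $(A_u,(\alpha_i)_{i\in I})$ as the colimit of the diagram $\big((A_i)_{i\in I},(\alpha_{ij})_{(i,j)}\big)$ in the category of unital $*$-algebras, since condition \textbf{(UVA)} is precisely the universal property of a colimiting cocone; existence then follows from a concrete construction and uniqueness from the usual abstract argument. Concretely, I would first form the free product (coproduct) $\mathcal{A}:=\bigstar_{i\in I}A_i$ of the family in the category of unital $*$-algebras, amalgamated over the common unit. This comes equipped with canonical unital $*$-homomorphisms $\kappa_i:A_i\to\mathcal{A}$ and enjoys the defining property that any family of unital $*$-homomorphisms out of the $A_i$ factors uniquely through $\mathcal{A}$; its involution is the one obtained by reversing words and applying the involutions of the factors.

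Next I would force the linking relations by passing to a quotient. Let $J\subseteq\mathcal{A}$ be the smallest two-sided ideal containing all elements $\kappa_j(\alpha_{ij}(a))-\kappa_i(a)$ with $(i,j)$ admissible and $a\in A_i$. Because the $\kappa_i$ and $\alpha_{ij}$ are $*$-homomorphisms, the adjoint of such a generator is again of the same form, with $a$ replaced by $a^*$, so the generating set is stable under involution and $J$ is automatically a $*$-ideal; hence $A_u:=\mathcal{A}/J$ inherits a well-defined involution and becomes a unital $*$-algebra. Setting $\alpha_i:=\pi\circ\kappa_i$, where $\pi:\mathcal{A}\to A_u$ is the quotient map, the compatibility $\alpha_j\circ\alpha_{ij}=\alpha_i$ holds by construction, since the defining differences lie in $J$.

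To verify \textbf{(UVA)}, given a unital $*$-algebra $B$ and a compatible family $(\phi_i:A_i\to B)_{i\in I}$, the universal property of the free product yields a unique unital $*$-homomorphism $\Phi:\mathcal{A}\to B$ with $\Phi\circ\kappa_i=\phi_i$. The hypothesis $\phi_j\circ\alpha_{ij}=\phi_i$ shows that $\Phi$ annihilates every generator of $J$, so $J\subseteq\ker\Phi$ and $\Phi$ descends to a unital $*$-homomorphism $\Phi_u:A_u\to B$ with $\Phi_u\circ\alpha_i=\phi_i$. Its uniqueness is clear once one observes that $A_u$ is generated as a unital $*$-algebra by $\bigcup_{i\in I}\alpha_i(A_i)$, because $\mathcal{A}$ is generated by the $\kappa_i(A_i)$ and $\pi$ is surjective, so any two such homomorphisms coincide on a generating set. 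Finally, uniqueness of the whole structure up to $*$-isomorphism is the standard consequence of the universal property: given a second pair $(A_u',(\alpha_i'))$ with the same property, \textbf{(UVA)} produces mutually inverse unital $*$-homomorphisms between $A_u$ and $A_u'$, since their composites satisfy the same factorization as the respective identity maps and are therefore forced to equal them.

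The step I expect to require the most care is the construction of the free product $\mathcal{A}$ for an arbitrary, possibly infinite, index set $I$, together with the verification that its involution and universal property behave as claimed; once that is in hand, the passage to the quotient and the checking of \textbf{(UVA)} are routine diagram chases. A secondary point worth stating explicitly is that $J$ is genuinely a $*$-ideal, which is exactly what allows the involution to descend to the quotient.
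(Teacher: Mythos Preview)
Your proposal is correct and follows essentially the same strategy as the paper: construct the colimit of the diagram in unital $*$-algebras and invoke the standard uniqueness argument. The only difference is packaging. You proceed in two stages, first taking the free product $\bigstar_{i\in I}A_i$ amalgamated over the unit and then quotienting by the linking relations, whereas the paper does everything in a single step by forming the tensor algebra $\mathcal{T}\big(\bigoplus_{i\in I}A_i\big)$ and quotienting by a $*$-ideal that simultaneously encodes multiplicativity in each factor, identification of units, and the linking relations. Since the free product is itself usually built exactly as such a tensor-algebra quotient, the two constructions are the same object; the paper merely unpacks the black box you invoke. Your own remark that the free-product construction is the step needing most care is apt: that is precisely the part the paper spells out explicitly, and if you were to fill in that detail you would recover their argument verbatim. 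The uniqueness portions of the two proofs are identical.
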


\begin{proof}
To begin with, we regard the unital $*$-algebras $A_i$, $i\in I$, as complex vector spaces and construct the associative tensor algebra $\mathcal{T}(\bigoplus_{i\in I}A_i)$ of their direct sum. With componentwise addition, componentwise multiplication with a scalar, componentwise antilinear involution $*$ and multiplication induced by the algebraic tensor product $\otimes$, $\mathcal{T}(\bigoplus_{i\in I}A_i)$ becomes a unital $*$-algebra. Next, we consider the two-sided $*$-ideal $\mathcal{I}$ generated by those elements of the form
\begin{align*}
&\Big(0_\mathbb{C},-(a^i_1a^i_2)^j_{i\in I}, (a^i_1)^j_{i\in I}\otimes(a^i_2)^j_{i\in I}, 0_{(\bigoplus_{i\in I}A_i)^{\otimes 3}},\dots\Big),\\
&\Big(1_\mathbb{C},-(1_{A_i})^j_{i\in I}, 0_{(\bigoplus_{i\in I}A_i)^{\otimes n}},\dots\Big),\\
&\Big(0_{\mathbb{C}},\big(\alpha_{ik}(a^i)\big)^j_{k\in I})-(a^k)^i_{k\in I},0_{(\bigoplus_{i\in I}A_i)^{\otimes 2}},\dots\Big),
\end{align*}
$a^i_1,a^i_2,a^i\in A_i$ and for all given $\alpha_{ij}$. $(a^i)^j_{i\in I}$ denotes the vector in $\bigoplus_{i\in I}A_i$ for which every entry is zero except the $j$-th one which is precisely $a^j\in A_j$. We denote the equivalence class of an element $a\in\mathcal{T}(\bigoplus_{i\in I}A_i)$ with respect to that quotient by $[a]$. Now, define
\begin{align*}
A_u:=\mathcal{T}(\bigoplus_{i\in I}A_i)\big/\mathcal{I}
\end{align*}
and for $j\in I$
\begin{align*}
\alpha_j:A_j\longrightarrow A_u\qquad a^j\longmapsto\Big[0_\mathbb{C}, (a^i)^j_{i\in I},0_{(\bigoplus_{i\in I}A_i)^{\otimes 2}},\dots\Big]. 
\end{align*}
$A_u$ defined in this way is a unital $*$-algebra and, per direct inspection, $\alpha_i$ turns out to be a well-defined unital $*$-homomorphism for all $i\in I$. The pair $\big(A_u, (\alpha_i)_{i\in I}\big)$ satisfies the universal property. To wit, let $B$ be any arbitrary unital $*$-algebra and $(\phi_i:A_i\longrightarrow B)_{i\in I}$ a family of unital $*$-homomorphisms such that $\phi_j\circ\alpha_{ij}=\phi_i$, whenever $\alpha_{ij}$ exists. Since any element $[a]\in A_u$ can be written as
\begin{align*}
[a]=\underset{n\in\mathbb{N}}{\sum}T_n^{k_1\dots k_n}\underset{l=1}{\overset{n}{\prod}}\underset{j\in I}{\sum}\Big[0_\mathbb{C},(a_{k_l}^i)^j_{i\in I}, 0_{(\bigoplus_{i\in I}A_i)^{\otimes 2}},\dots\Big],
\end{align*}
$T_n^{k_1\dots k_n}\in\mathbb{C}$, by the structure of $\mathcal{T}(\bigoplus_{i\in I}A_i)$, a unital $*$-homomorphism $\Phi_u:A_u\longrightarrow B$ such that $\Phi_u\circ\alpha_i=\phi_i$ for all $i\in I$ is uniquely fixed by 
\begin{align*}
\Phi_u([a])=\underset{n\in\mathbb{N}}{\sum}T_n^{k_1\dots k_n}\underset{l=1}{\overset{n}{\prod}}\underset{j\in I}{\sum}\phi_j(a^j_{k_l}).
\end{align*}
This shows the existence of the universal algebra. Let $\big(B, (\beta_i)_{i\in I}\big)$ be another pair consisting of a unital $*$-algebra and unital $*$-homomorphisms having the universal property (UVA). Thus we have a unique unital $*$-homomorphism $\Psi:B\longrightarrow A_u$ fulfilling $\Psi\circ\alpha_i=\beta_i$ for all $i\in I$. According to (UVA), $\Phi_u$ is the unique unital $*$-homomorphism such that $\Phi_u\circ\beta_i=\alpha_i$ for all $i\in I$. $A_u\overset{\Psi}{\longrightarrow}B\overset{\Phi_u}{\longrightarrow}A_u$ and $\Phi_u\circ\Psi\circ\alpha_i=\alpha_i$ for all $i\in I$. However, since $\big(A_u, (\alpha_i)_{i\in I}\big)$ has the universal property, thus the unital $*$-homomorphism $A_u\longrightarrow A_u$ is unique, and $\id_{A_u}:A_u\longrightarrow A_u$ satisfies $\id_{A_u}\circ\alpha_i=\alpha_i$ for all $i\in I$ as well, necessarily it holds that $\Phi_u\circ\Psi=\id_{A_u}$. $B\overset{\Phi_u}{\longrightarrow}A_u\overset{\Psi}{\longrightarrow}B$ and $\Psi\circ\Phi_u\circ\beta_i=\beta_i$ for all $i\in I$. In the same manner, since $\big(B, (\beta_i)_{i\in I}\big)$ has the universal property, the unital $*$-homomorphism $B\longrightarrow B$ is unique and $\id_B:B\longrightarrow B$ satisfies $\id_B\circ\beta_i=\beta_i$ for all $i\in I$, $\Psi\circ\Phi_u=\id_B$. Thereby $A_u$ and $B$ are isomorphic via a unital $*$-isomorphism.
\end{proof}
%===============================================================================================================================
\subsection{Tiling the spacetime}
Let us recall that every connected component $\Gamma_c$, $c=1,\dots,n$ for an $n\in\mathbb{N}$, of an oriented 4-dimensional globally hyperbolic spacetime \((M, g)\) with at most finitely many connected components will turn into a connected, oriented, 4-dimensional, globally hyperbolic, embedded subspacetime, if endowed with the structures induced by $(M, g)$. Consequently any of these $\Gamma_c$ can be foliated up to isometries as \(\mathbb{R}\times\Sigma_c\). Here $\Sigma_c=\Sigma\cap\Gamma_c$ being a smooth spacelike Cauchy surface endowed with the natural structures inherited from $(\Gamma_c,g_c)$, $\iota_\Sigma:\Sigma\longrightarrow M$ is a smooth spacelike Cauchy surface of $(M, g)$. Therefore each $x\in M$ lies precisely in one connected component $\Gamma_c$ and at least on one of such surfaces, which we denote by $\Sigma_{cx}$, and we can always construct an open subset $S_x\subseteq\Sigma$ that is either contractible or disconnected with finitely many contractible connected components. The net advantage is that its associated Cauchy development $D^M(S_x)$ is in turn a contractible open subset or a disconnected open subset of $M$ with finitely many contractible connected components that will become an oriented, 4-dimensional, globally hyperbolic, embedded subspacetime, if equipped with the structures induced by $(M, g)$. Since this procedure can be repeated for all points of the manifold, we can always cover $M$ with contractible open subsets and disconnected open subsets with finitely many contractible connected components such that these open subsets will become oriented, 4-dimensional, globally hyperbolic, embedded subspacetimes, if endowed with the structures induced by $(M, g)$. But not any such cover will do the trick. We need a very specific cover, namely $\bigcup_{i\in I}M_i=M$ that of all contractible open subsets of $M$ and all disconnected open subsets of $M$ with finitely many contractible connected components such that $M_i$ becomes an oriented, 4-dimensional, globally hyperbolic, embedded subspacetime for all $i\in I$, if endowed with the structures induced by $(M, g)$ and such that in addition the image of the inclusion $\iota_i:M_i\longrightarrow M$ is causally convex. The endpoint are oriented 4-dimensional globally hyperbolic spacetimes with at most finitely many contractible connected components, which we denote as $(M_i, g_i)$, $i\in I$. $I$ is actually a set because this cover $\{M_i\mid i\in I\}$ is contained in the power set of $M$. 
%===============================================================================================================================
\subsection{The local field algebras}
For each $i\in I$, we associate to the oriented 4-dimensional globally hyperbolic spacetime $(M_i, g_i)$ with at most finitely many contractible connected components the local field algebra $\mathfrak{F}(M_i)$ of $F$. Notice that, since $M_i$ has at most finitely many contractible connected components contractible, the first equation in \eqref{Maxwell} entails via Poincar\'e lemma that $F=dA$ where $A\in\Omega^1(M_i,\mathbb{C})$. Also, there is no ambiguity in constructing the field algebra for a disconnected spacetime with finitely many contractible connected components in the same manner as it is done in the connected contractible case.
\begin{defi}\label{fieldal}
We call the field algebra of the field strength tensor on an oriented 4-dimensional globally hyperbolic spacetime $(M,g)$ with at most finitely many contractible connected components, $\mathfrak{F}(M)$, the unital $*$-algebra generated by the elements $\widehat{\bf F}(\omega)$ with $\omega\in\Omega^2_0(M,\mathbb{C})$ together with the defining relations 
\begin{equation*}
\label{properties local field strength operator}\begin{array}{l}
\textrm{EOM 1)}\quad \widehat{\bf F}(\omega)=0,\quad\forall\omega=\delta\eta,\quad\eta\in\Omega^3_0(M,\mathbb{C})\\
\textrm{EOM 2)}\quad \widehat{\bf F}(\omega)=0,\quad\forall\omega=d\theta,\quad\theta\in\Omega^1_0(M,\mathbb{C})\\
\textrm{COMM)}\quad \Big[\widehat{\bf{F}}(\omega), \widehat{\bf{F}}(\omega')\Big]=i\,(\int_M G\delta\omega\wedge*\delta\omega')\,1_{\mathfrak{F}(M)},\enspace\forall\omega, \omega'\in\Omega^2_0(M, \mathbb{C}),\\
\end{array}
\end{equation*}
where $G$ is the causal propagator associated to the $\Box$-operator and $1_{\mathfrak{F}(M)}$ is the identity element of the algebra. The $*$-operation is the complex conjugation.
\end{defi}
We remark that, in the above definition, the first two conditions entail the fulfillment of the equations of motion and the equalities are meant in a distributional sense, {\it i.e.}, $\widehat{\bf F}(\delta\eta)=d\widehat{\bf F}(\eta)=0$ and similarly for {\em EOM 2)}. The form of the commutator descends from earlier analyses, see in particular \cite{Lich, Dimock}. Notice also that isotony is automatically implemented, that is, for given $(M_1,g_1)$ and $(M_2,g_2)$ with $M_1\subseteq M_2$ and $g_1=g_2|_{M_1}$, then $\mathfrak{F}(M_1)\subseteq\mathfrak{F}(M_2)$. In other words there always exists an injective unital $*$-homomorphism of algebras \(\alpha_{12}:\mathfrak{F}(M_1)\longrightarrow\mathfrak{F}(M_2)\), realised by $\alpha_{12}\big(\widehat{\textbf{F}}_1(\omega)\big):=\widehat{\textbf{F}}_2(\iota_{12*}\omega)$ with the help of the inclusion $\iota_{12}:M_1\longrightarrow M_2$ and subject to the additional compatibility condition \(\alpha_{23}\circ\alpha_{12}=\alpha_{13}\), whenever we consider three oriented 4-dimensional globally hyperbolic spacetimes with at most finitely many contractible connected components such that \(M_1\subseteq M_2\subseteq M_3\). Consequently, we obtain a family $\big(\mathfrak{F}(M_i)\big)_{i\in I}$ of unital $*$-algebras together with linking unital $*$-homomorphisms $\alpha_{ij}:\mathfrak{F}(M_i)\longrightarrow\mathfrak{F}(M_j)$ for $M_i\subseteq M_j$ that meet the compatibility condition $\alpha_{jk}\circ\alpha_{ij}=\alpha_{ik}$ whenever $M_i\subseteq M_j\subseteq M_k$.
%===============================================================================================================================
\subsection{The global field algebra and as the universal algebra}
Since the system of unital $*$-algebras together with the unital $*$-homomorphisms as specified before satisfies the conditions of theorem \ref{universal algebra}, its application yields the universal algebra $\big(\mathfrak{F}_u(M), (\alpha_i)_{i\in I}\big)$. We define the global field algebra of the field strength tensor of an arbitrary oriented 4-dimensional globally hyperbolic spacetime $(M,g)$ with at most finitely many connected components to be the unital $*$-algebra $\mathfrak{F}_u(M)$. Its properties are clarified in the following statements:

\begin{lem} $\mathfrak{F}_u(M)$ satisfies the local compatibility condition
\begin{align*}
\alpha_i\big(\widehat{\textbf{F}}_i(\iota^*_i\omega)\big)=\alpha_{j}\big(\widehat{\textbf{F}}_j(\iota^*_j\omega)\big)\in\mathfrak{F}_u(M)
\end{align*}
whenever $\omega\in\Omega^2_0(M,\mathbb{C})$ such that $\supp\omega\subset M_i\cap M_j$.
\end{lem}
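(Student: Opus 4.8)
The plan is to reduce the asserted equality in $\mathfrak{F}_u(M)$ to the single structural relation among the universal maps, namely $\alpha_j\circ\alpha_{ij}=\alpha_i$ whenever $M_i\subseteq M_j$, which is encoded in the third family of generators of the ideal $\mathcal I$ in Theorem \ref{universal algebra}. First I would record the elementary compatibility between pull-back and push-forward: if $\eta\in\Omega^2_0(M,\mathbb{C})$ has $\supp\eta\subseteq M_k\subseteq M_i$, then the extension by zero of $\iota_k^*\eta$ from $M_k$ to $M_i$ coincides with $\iota_i^*\eta$, because both vanish on $M_i\setminus\supp\eta$ and agree on $M_k$. Hence $\alpha_{ki}\big(\widehat{\textbf{F}}_k(\iota_k^*\eta)\big)=\widehat{\textbf{F}}_i(\iota_{ki*}\iota_k^*\eta)=\widehat{\textbf{F}}_i(\iota_i^*\eta)$, and applying $\alpha_i$ together with $\alpha_i\circ\alpha_{ki}=\alpha_k$ gives $\alpha_i\big(\widehat{\textbf{F}}_i(\iota_i^*\eta)\big)=\alpha_k\big(\widehat{\textbf{F}}_k(\iota_k^*\eta)\big)$. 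The same identity holds with $j$ in place of $i$ as soon as $M_k\subseteq M_j$ too. Thus, for any element $M_k$ of the cover with $\supp\eta\subseteq M_k\subseteq M_i\cap M_j$, both sides of the claimed equation factor through the common element $\alpha_k\big(\widehat{\textbf{F}}_k(\iota_k^*\eta)\big)$ and therefore coincide.

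It then remains to produce such common sub-elements and to pass from special $\eta$ to the given $\omega$. The key geometric input is that $M_i\cap M_j$ is again an open causally convex subset of $M$, since the intersection of two causally convex sets is causally convex, and is therefore itself globally hyperbolic; moreover causal convexity is transitive, so any subset causally convex inside $M_i\cap M_j$ is automatically causally convex in $M$. Using the foliation and the tiling construction applied inside $M_i\cap M_j$, I would cover the compact set $\supp\omega$ by finitely many open sets $O_1,\dots,O_m\subseteq M_i\cap M_j$, each contractible, globally hyperbolic and causally convex in $M_i\cap M_j$, hence each an element of the cover indexing $\mathfrak{F}_u(M)$. Choosing a partition of unity $\{\chi_a\}_{a=1}^m$ subordinate to these sets with $\sum_a\chi_a=1$ on $\supp\omega$, I decompose $\omega=\sum_a\chi_a\omega$ with $\supp(\chi_a\omega)\subseteq O_a$, apply the first paragraph to each $\eta=\chi_a\omega$ with $M_k=O_a$, and reassemble using the linearity of the smearing $\omega\mapsto\widehat{\textbf{F}}(\omega)$. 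This yields $\alpha_i\big(\widehat{\textbf{F}}_i(\iota_i^*\omega)\big)=\sum_a\alpha_{O_a}\big(\widehat{\textbf{F}}_{O_a}(\iota_{O_a}^*(\chi_a\omega))\big)=\alpha_j\big(\widehat{\textbf{F}}_j(\iota_j^*\omega)\big)$, which is the assertion.

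The main obstacle is precisely this middle, geometric step: guaranteeing that a genuine element of the cover can be interposed between $\supp\omega$ and $M_i\cap M_j$. The difficulty is twofold. First, the interpolating sets must be simultaneously globally hyperbolic, contractible (or a finite disjoint union of contractibles) and causally convex in all of $M$; this is exactly where the causal convexity of $M_i\cap M_j$ together with transitivity of causal convexity is indispensable, since otherwise the Cauchy developments furnished by the tiling construction could fail to be causally convex in $M$. Second, a single connected neighbourhood of $\supp\omega$ inside $M_i\cap M_j$ need not be contractible, because the Cauchy surface of $M_i\cap M_j$ may carry nontrivial topology; the partition of unity is what circumvents this, since it only ever requires small contractible diamonds around individual points, and it is the linearity of $\widehat{\textbf{F}}$ in its test form that lets these local identifications be recombined into the global one. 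Everything else is a formal diagram chase through the defining relations of the universal algebra.
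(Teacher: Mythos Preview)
Your proposal is correct and follows essentially the same strategy as the paper: cover $\supp\omega$ by small contractible, globally hyperbolic, causally convex open sets inside $M_i\cap M_j$ that belong to the indexing cover, decompose $\omega$ via a subordinate partition of unity, and use the universal relation $\alpha_i\circ\alpha_{ki}=\alpha_k$ on each piece to factor both sides through a common element. The only difference is in how the interpolating sets are produced: the paper invokes the Alexandrov topology and picks contractible diamonds $D_x=I^M_-(p)\cap I^M_+(q)$ around each point of $\supp\omega$, which are automatically causally convex in $M$, whereas you run the tiling construction inside $M_i\cap M_j$ and then appeal to transitivity of causal convexity; both routes are valid and lead to the same computation.
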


\begin{proof}
Since $(M,g)$ is globally hyperbolic, its standard topology coincides with the Alexandrov one, i.e. that which has the diamonds $I^M_-(p)\cap I^M_+(q)$ as its basis. As a result of that, there exists for every point $x\in\supp\omega$ a contractible diamond $D_x$ containing $x$ and $D_x\subseteq M_i\cap M_j$, in particular $D_x\subseteq M_i,M_j$. With the structures induced by $(M,g)$ these diamonds $D_x$ become 4-dimensional oriented globally hyperbolic embedded subspacetimes and therefore belong to our chosen cover $\bigcup_{i\in I}M_i=M$. Let $(\chi^x)_{x\in\supp\omega}$ be a partition of unity subordinated to that open cover $\bigcup_{x\in\supp\omega}D_x\supseteq\supp\omega$. Since a partition of unity is locally finite, all appearing sums are actually finite and we can compute
\begin{align*}
\alpha_i\big(\widehat{\textbf{F}}_i(\iota^*_i\omega)\big)&=\underset{x}{\sum}\alpha_i\big(\widehat{\textbf{F}}_i(\iota_{xi*}\iota^*_{xi}\iota^*_i\chi^x\omega)\big)=\underset{x}{\sum}\alpha_i\circ\alpha_{xi}\Big(\widehat{\textbf{F}}_x\big((\iota_i\circ\iota_{xi})^*\chi^x\omega\big)\Big)\\
&=\underset{x}{\sum}\alpha_x\big(\widehat{\textbf{F}}_x(\iota^*_x\chi^x\omega)\big)=\underset{x}{\sum}\alpha_j\circ\alpha_{xj}\Big(\widehat{\textbf{F}}_x\big((\iota_j\circ\iota_{xj})^*\chi^x\omega\big)\Big)\\
&=\alpha_j\big(\widehat{\textbf{F}}_j(\iota^*_j\omega)\big),
\end{align*}
which is the sought result.
\end{proof}

\begin{prop}\label{universal algebra properties} In $\mathfrak{F}(M)$, we can define \emph{global} smeared field strength operators $\widehat{\textbf{F}}(\omega)$ for all $\omega\in\Omega^2_0(M,\mathbb{C})$ such that
\begin{enumerate}
\item [$\bullet$] $\mathfrak{F}(M)$ is generated by the global smeared field strength operators $\widehat{\textbf{F}}(\omega)$, $\omega\in\Omega^2_0(M, \mathbb{C})$,
\item [$\bullet$] $\widehat{\textbf{F}}$ fulfils Maxwell's equations in a weak sense, i.e. EOM 1)
$\widehat{\textbf{F}}(\delta\eta)=0$ for all $\eta\in\Omega^3_0(M, \mathbb{C})$ and EOM 2) $\widehat{\textbf{F}}(d\theta)=0$ for all $\theta\in\Omega^1_0(M, \mathbb{C})$,
\item [$\bullet$] $\widehat{\textbf{F}}(z_1\omega_1+z_2\omega_2)=z_1\widehat{\textbf{F}}(\omega_1)+z_2\widehat{\textbf{F}}(\omega_2)$ for all $z_i\in\mathbb{C}$, for all $\omega_i\in\Omega^2_0(M, \mathbb{C})$ (linearity),
\item [$\bullet$] $\widehat{\textbf{F}}(\omega)^*=\widehat{\textbf{F}}(\overline{\omega})$ for all $\omega\in\Omega^2_0(M, \mathbb{C})$ (Hermicity).
\end{enumerate}
Furthermore, $\mathfrak{F}(M)$ obeys the principle of locality, i.e. $\big[\widehat{\textbf{F}}(\omega),\widehat{\textbf{F}}(\omega')\big]=0$ for all $\omega,\omega'\in\Omega^2_0(M,\mathbb{C})$ that are spacelike separated.
\end{prop}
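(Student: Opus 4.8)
The plan is to build the global operators by localisation. For $\omega\in\Omega^2_0(M,\mathbb{C})$ I pick a partition of unity $\{\chi_a\}$ subordinate to the cover $\{M_i\}_{i\in I}$; it is finite on the compact set $\supp\omega$, with $\supp(\chi_a\omega)\subset M_{i(a)}$ for some index $i(a)$, and I set
\[\widehat{\textbf{F}}(\omega):=\sum_{a}\alpha_{i(a)}\big(\widehat{\textbf{F}}_{i(a)}(\iota^{*}_{i(a)}(\chi_{a}\omega))\big)\in\mathfrak{F}(M)=\mathfrak{F}_u(M).\]
The first thing to check is that this does not depend on the partition. Given a second such partition $\{\psi_b\}$, the form $\chi_a\psi_b\omega$ is supported in $M_{i(a)}\cap M_{j(b)}$, so the local compatibility condition of the preceding lemma gives $\alpha_{i(a)}\big(\widehat{\textbf{F}}_{i(a)}(\iota^{*}_{i(a)}(\chi_a\psi_b\omega))\big)=\alpha_{j(b)}\big(\widehat{\textbf{F}}_{j(b)}(\iota^{*}_{j(b)}(\chi_a\psi_b\omega))\big)$; summing over $b$ (using $\sum_b\psi_b=1$ and linearity of each local operator) and then over $a$ shows the two prescriptions coincide. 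In particular, whenever $\supp\omega\subset M_i$ one may take a single cut-off equal to $1$ near $\supp\omega$, so that $\widehat{\textbf{F}}(\omega)=\alpha_i\big(\widehat{\textbf{F}}_i(\iota^{*}_i\omega)\big)$.

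The four listed properties then follow almost formally. For generation, every generator $\widehat{\textbf{F}}_i(\mu)$ of $\mathfrak{F}(M_i)$ arises as $\widehat{\textbf{F}}(\omega)$ with $\omega$ the extension of $\mu$ by zero to $M$; since by Theorem~\ref{universal algebra} the algebra $\mathfrak{F}_u(M)$ is generated by $\bigcup_i\alpha_i(\mathfrak{F}(M_i))$, the global operators generate $\mathfrak{F}(M)$. Linearity and Hermicity are obtained by evaluating both sides on one common (real-valued) partition of unity and invoking the corresponding properties of the local operators and the $*$-homomorphism property of the $\alpha_i$. For the equations of motion the decisive observation is that $\iota^{*}_i$ is restriction to an open subset carrying the induced metric, hence commutes with both $d$ and $\delta$. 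Writing $\eta=\sum_a\chi_a\eta$ with $\chi_a\eta\in\Omega^3_0(M_{i(a)},\mathbb{C})$ and using linearity together with the single-patch identity above,
\[\widehat{\textbf{F}}(\delta\eta)=\sum_{a}\alpha_{i(a)}\big(\widehat{\textbf{F}}_{i(a)}(\delta\,\iota^{*}_{i(a)}(\chi_{a}\eta))\big)=0\]
by EOM~1) of Definition~\ref{fieldal} in each $\mathfrak{F}(M_{i(a)})$; EOM~2) is identical with $d\theta$ in place of $\delta\eta$.

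The principle of locality is the substantial part. Let $\omega,\omega'$ be spacelike separated. Because $J^\pm$ of a compact set is closed in a globally hyperbolic spacetime, I can enclose $\supp\omega$ and $\supp\omega'$ in spacelike separated open neighbourhoods and, since the manifold topology agrees with the Alexandrov one, cover them by finitely many contractible, causally convex diamonds $D_a$ and $E_b$ of the cover with $D_a$ spacelike to $E_b$. Partitions of unity subordinate to $\{D_a\}$ and $\{E_b\}$ together with bilinearity of the commutator reduce the claim to $\big[\widehat{\textbf{F}}(\chi_a\omega),\widehat{\textbf{F}}(\psi_b\omega')\big]$ with $\supp(\chi_a\omega)\subset D_a$, $\supp(\psi_b\omega')\subset E_b$. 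The geometric heart of the argument is that the disjoint union $N_{ab}:=D_a\sqcup E_b$ is a disconnected open subset with two contractible components which is again globally hyperbolic and whose image is causally convex — a causal curve with endpoints in $N_{ab}$ cannot run between the spacelike separated components, and each diamond is itself causally convex — so $N_{ab}$ belongs to the cover. Factoring $\alpha_{D_a}$ and $\alpha_{E_b}$ through $\alpha_{N_{ab}}$ by the compatibility condition turns the commutator into $\alpha_{N_{ab}}$ applied to $\big[\widehat{\textbf{F}}_{N_{ab}}(\mu),\widehat{\textbf{F}}_{N_{ab}}(\nu)\big]$, which by relation COMM of Definition~\ref{fieldal} equals $i\big(\int_{N_{ab}}G_{N_{ab}}\,\delta\mu\wedge *\,\delta\nu\big)\,1_{\mathfrak{F}(N_{ab})}$. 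On the disconnected spacetime $N_{ab}$ the causal propagator cannot connect the two components, so $\supp(G_{N_{ab}}\delta\mu)\subseteq J^\pm_{N_{ab}}(\supp\mu)\subset D_a$, which is disjoint from $\supp(*\delta\nu)\subset E_b$; the integral therefore vanishes and the commutator is zero.

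I expect this last geometric reduction to be the main obstacle: to apply the local commutation relation one must place two spacelike separated pieces inside a single member of the cover, and no connected member need contain both supports. This is exactly what forces the cover to admit disconnected spacetimes with finitely many contractible components, and it requires verifying that the disjoint union of two spacelike separated causally convex globally hyperbolic diamonds is once more globally hyperbolic with causally convex image, together with the fact that the causal propagator on such a disconnected background is block-diagonal with respect to its components. The remaining items — independence of the partition of unity, the algebraic properties, and the weak equations of motion — are then routine.
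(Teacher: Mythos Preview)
Your proof is correct and follows essentially the same approach as the paper's: define the global operators via a partition of unity subordinate to the cover, use the local compatibility lemma to show independence of the partition and obtain the single-patch identity, derive the algebraic properties and the weak equations of motion from their local counterparts, and prove locality by reducing via partitions of unity to pairs of test forms supported in spacelike separated diamonds whose disjoint union is itself a member of the cover, where the local commutation relation COMM together with the block-diagonal nature of the causal propagator on a disconnected background gives zero. Your treatment is in fact slightly more explicit than the paper's in two places: you spell out the EOM argument by partitioning $\eta$ (respectively $\theta$) rather than $\delta\eta$ before invoking the single-patch identity, and you state clearly that the diamonds $D_a$ and $E_b$ must be chosen spacelike separated (not merely disjoint) so that the image of $D_a\sqcup E_b$ is causally convex and hence belongs to the cover.
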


\begin{proof}
Choose any partition of unity $(\psi^i)_{i\in I}$ subordinated to the cover $\bigcup_{i\in I}M_i=M$ and define
\begin{align*}
\widehat{\textbf{F}}(\omega):=\underset{i\in I}{\sum}\alpha_i\big(\widehat{\textbf{F}}_c(\iota^*_i\psi^i\omega)\big).
\end{align*}
First of all, the sum is finite because the partition of unity is locally finite and $\omega$ is of compact support. Secondly, this definition does not depend on the chosen partition of unity for let $(\varphi^j)_{j\in J}$ be another partition of unity, then
\begin{align*}
\widehat{\textbf{F}}(\omega)&=\underset{i\in I}{\sum}\alpha_i\big(\widehat{\textbf{F}}_i(\iota^*_i\psi^i\omega)\big)=\underset{i,j\in I}{\sum}\alpha_i\big(\widehat{\textbf{F}}_i(\iota^*_i\varphi^j\omega^i)\big)=\underset{i,j\in I}{\sum}\alpha_j\big(\widehat{\textbf{F}}_j(\iota^*_j\varphi^j\omega^i)\big)\\
&=\underset{j\in I}{\sum}\alpha_j\big(\widehat{\textbf{F}}_j(\iota^*_j\varphi^j\omega)\big)
\end{align*} 
where we applied the local compatibility of $\mathfrak{F}(M)$, i.e. the foregoing lemma. Note, that this implies another local compatibility property for $\mathfrak{F}(M)$, namely let $\omega\in\Omega^2_0(M,\mathbb{C})$ be completely contained in one $M_j$ for a $j\in I$, i.e. $\supp\omega\subseteq M_j$, then
\begin{align*}
\widehat{\textbf{F}}(\omega)&=\underset{i\in I}{\sum}\alpha_i\big(\widehat{\textbf{F}}_i(\iota^*_i\psi^i\omega)\big)=\underset{i\in I}{\sum}\alpha_j\big(\widehat{\textbf{F}}_j(\iota^*_j\psi^i\omega)\big)=\alpha_j\big(\widehat{\textbf{F}}_j(\iota^*_j\omega)\big).
\end{align*}
Linearity, Hermicity and Maxwell's equations in a weak sense follow from this definition and from their implementation at a level of local field algebras $\mathfrak{F}(M_i)$, $i\in I$. To show locality, we start with a slightly simpler statement. Let $\omega\in\Omega^2_0(M,\mathbb{C})$ and $\omega'\in\Omega^2_0(M, \mathbb{C})$ be spacelike separated and $\supp\omega\subseteq M_i$ for a $i\in I$ and $\supp\omega'\subseteq M_j$ for a $j\in I$ such that $M_i$ and $M_j$ are spacelike separated, i.e. $M_i\cap M_j=\emptyset$. $M_i\sqcup M_j$ can be regarded as an oriented, 4-dimensional, globally hyperbolic, embedded submanifold of $(M,g)$ with finitely many contractible connected components. Hence, we have an injective unital $*$-homomorphism $\alpha_{i\sqcup j}:\mathfrak{F}(M_i\sqcup M_j)\longrightarrow\mathfrak{F}_u(M)$ for $i\sqcup j\in I$. We find that
\begin{align*}
\Big[\widehat{\textbf{F}}_u(\omega), \widehat{\textbf{F}}_u(\omega')\Big]&=\Big[\alpha_{i\sqcup j}\big(\widehat{\textbf{F}}_{i\sqcup j}(\iota^*_{i\sqcup j}\omega)\big), \alpha_{i\sqcup j}\big(\widehat{\textbf{F}}_{i\sqcup j}(\iota^*_{i\sqcup j}\omega')\big)\Big]\\
&=\alpha_{i\sqcup j}\Big[\widehat{\textbf{F}}_{i\sqcup j}(\iota^*_{i\sqcup j}\omega), \widehat{\textbf{F}}_{i\sqcup j}(\iota^*_{i\sqcup j}\omega')\Big]\\
&=\alpha_{i\sqcup j}\Big(i\int_{M_i\sqcup M_j}G\delta\iota_{i\sqcup j}^*\omega\wedge*\delta\iota_{i\sqcup j}^*\omega'\enspace1_{\mathfrak{F}(M_i\sqcup M_j)}\Big)\\
&=0
\end{align*}
because $\omega$ and $\omega'$ are spacelike separated and, therefore, so are $\iota_{i\sqcup j}^*\omega$ and $\iota_{i\sqcup j}^*\omega'$. Now, let $\omega,\omega'\in\Omega^2_0(M,\mathbb{C})$ be spacelike separated without further restrictions or assumptions. Since the topology of $(M,g)$ coincides with the Alexandrov one and since $\omega$ and $\omega'$ being spacelike separated implies $\supp\omega\cap\supp\omega'=\emptyset$ in particular, we can cover $\supp\omega$ with diamonds $D_x$, $\supp\omega\subseteq\bigcup_{x\in\supp\omega}D_x$, and $\supp\omega'$ with diamonds $D_y$, $\supp\omega'\subseteq\bigcup_{y\in\supp\omega'}D_y$ in such a way that $D_x\cap D_y=\emptyset$ for all pairs $(x,y)\in\supp\omega\times\supp\omega'$ (smooth manifolds are $T_4$). These diamonds will turn into contractible 4-dimensional oriented globally hyperbolic spacetimes if we endow them with the structures induced by $(M,g)$. Let $(\psi^x)_{x\in\supp\omega}$ be a partition of unity subordinated to $(D_x)_{x\in\supp\omega}$ and $(\varphi^y)_{y\in\supp\omega'}$ be a partition of unity subordinated to $(D_y)_{y\in\supp\omega'}$ respectively. On account of the previous results, it holds
\begin{align*}
\Big[\widehat{\textbf{F}}_u(\omega), \widehat{\textbf{F}}_u(\omega')\Big]&=\Big[\widehat{\textbf{F}}_u(\underset{x}{\sum}\psi^x\omega), \widehat{\textbf{F}}_u(\underset{y}{\sum}\varphi^y\omega')\Big]=\Big[\widehat{\textbf{F}}_u(\underset{x}{\sum}\omega^x), \widehat{\textbf{F}}_u(\underset{y}{\sum}\omega'^y)\Big]\\
&=\underset{x,y}{\sum}\Big[\widehat{\textbf{F}}_u(\omega^x), \widehat{\textbf{F}}_u(\omega'^y)\Big]\\
&=0,
\end{align*}
since $\omega^x,\omega'^y\in\Omega^2_0(M,\mathbb{C})$ are pairwise spacelike separated and $\supp\omega^x\subseteq D_x$ and $\supp\omega'^y\subseteq D_y$. Note that the sums are finite because the partition of unity is locally finite.
\end{proof}

%===============================================================================================================================
%Lichnerowizc's propagator===========================================================
%===============================================================================================================================
The proof of $\mathfrak{F}_u(M)$ obeying the principle of locality makes it clear why we wanted to consider disconnected spacetimes with finitely many connected contractible components in the first place. What remains to be shown is our claim that the commutator in $\mathfrak{F}_u(M)$ is given by the Lichnerowicz commutator:

\begin{prop}\label{propagator}
The commutator between two algebra elements in \(\mathfrak{F}_u(M)\) is provided by the so-called Lichnerowicz commutator \cite{Lich}
\begin{align}\label{Lichnerowicz commutator global field algebra}
\Big[\widehat{\bf F}(\omega), \widehat{\bf F}(\omega')\Big]=i\,(\int_MG\delta\omega\wedge*\delta\omega')\,1_{\mathfrak{F}_u(M)}\enspace\forall\omega,\omega'\in\Omega^2_0(M, \mathbb{C}),
\end{align}
where $1_{\mathfrak{F}_u(M)}$ is the identity element of the universal algebra and $G$ the causal propagator of the $\Box$-operator.
\end{prop}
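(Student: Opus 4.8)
The plan is to reduce the global commutator to the local relation \textbf{(COMM)} of Definition \ref{fieldal} and then to transport the \emph{locally} computed Lichnerowicz formula to all of $M$ by exploiting the causal convexity of the members of the cover. The analytic ingredient I would record first is the compatibility of the causal propagator with the inclusions $\iota_j:M_j\hookrightarrow M$: for every compactly supported form $\psi$ with $\supp\psi\subseteq M_j$ one has $\iota_j^*G\psi=G_j\iota_j^*\psi$, where $G$ and $G_j$ are the causal propagators of $\Box$ on $M$ and on $M_j$. This follows from the uniqueness of the advanced/retarded Green operators together with $J^\pm_{M_j}(\supp\psi)=J^\pm_M(\supp\psi)\cap M_j$, which holds precisely because $\iota_j(M_j)$ is causally convex \cite{BGP}. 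Since each $M_j$ is an \emph{open} $4$-dimensional subspacetime carrying the induced metric, the local operators $d$, $\delta$ and $*$ all commute with $\iota_j^*$ (they are simply restrictions), so the entire integrand $G\delta\omega\wedge*\delta\omega'$ pulls back correctly.

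With this in hand I would first prove the claim in the single-patch case $\supp\omega,\supp\omega'\subseteq M_j$ for one $j\in I$. The local compatibility property established inside the proof of Proposition \ref{universal algebra properties} gives $\widehat{\textbf{F}}(\omega)=\alpha_j\big(\widehat{\textbf{F}}_j(\iota_j^*\omega)\big)$ and likewise for $\omega'$. As $\alpha_j$ is a unital $*$-homomorphism, it preserves commutators and sends $1_{\mathfrak{F}(M_j)}$ to $1_{\mathfrak{F}_u(M)}$, so applying \textbf{(COMM)} inside $\mathfrak{F}(M_j)$ yields
\[
\Big[\widehat{\textbf{F}}(\omega),\widehat{\textbf{F}}(\omega')\Big]=i\Big(\int_{M_j}G_j\,\delta\iota_j^*\omega\wedge*\delta\iota_j^*\omega'\Big)1_{\mathfrak{F}_u(M)}.
\]
The integrand is supported in $M_j$ because $\delta\iota_j^*\omega'$ is, and the restriction identity of the previous paragraph turns the right-hand integral into $\int_M G\delta\omega\wedge*\delta\omega'$, which settles this case.

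To pass to arbitrary $\omega,\omega'\in\Omega^2_0(M,\mathbb{C})$ I would argue exactly as in the locality proof: choose partitions of unity subordinate to covers of $\supp\omega$ and $\supp\omega'$ by contractible diamonds, write $\omega=\sum_x\omega^x$ and $\omega'=\sum_y\omega'^y$ with $\supp\omega^x\subseteq D_x$, $\supp\omega'^y\subseteq D_y$, and observe that the sums are finite. By bilinearity of the commutator it suffices to treat each pair $(\omega^x,\omega'^y)$, and for each such pair I would produce a single member $M_k\in I$ containing both supports: if $D_x$ and $D_y$ are spacelike separated one takes the disjoint $D_x\sqcup D_y\in I$ (and the single-patch formula then returns $0$, consistently with the causal support of $G$), whereas if they are causally related—hence inside one connected component—one constructs a connected \emph{contractible} causally convex globally hyperbolic patch containing both by joining the Cauchy slices $S_x,S_y\subseteq\Sigma$ of $D_x,D_y$ with a thin tube on $\Sigma$ and taking the Cauchy development of the resulting contractible open set. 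Applying the single-patch case to every pair and re-summing with the bilinearity of $\int_M G\delta\,\cdot\,\wedge*\delta\,\cdot\,$ reassembles $\sum_x\omega^x=\omega$ and $\sum_y\omega'^y=\omega'$, giving \eqref{Lichnerowicz commutator global field algebra}.

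I expect the main obstacle to be precisely this last geometric step: guaranteeing that \emph{every} relevant pair of localized test forms is jointly supported in some admissible patch $M_k\in I$, i.e. in a contractible (or finitely-many-contractible-component) causally convex globally hyperbolic open set. The spacelike case is immediate, but the causally related case relies on the fact that two contractible regions lying in a common connected component are always contained in a larger contractible causally convex region, and this is where the product structure $\mathbb{R}\times\Sigma$ and the contractibility of the Cauchy slices are genuinely used. The accompanying analytic point—the commutation of $G$ with the causally convex embeddings—is standard but must be stated with care, since it is exactly what forces the locally computed propagator to coincide with the global one appearing in the asserted formula.
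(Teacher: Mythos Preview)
Your single-patch reduction and the pull-back identity $\iota_j^*G=G_j\iota_j^*$ are fine and are also what the paper uses implicitly. The gap is exactly the step you flagged as ``the main obstacle'', and your proposed resolution does not work in general. The claim that two contractible diamonds in a common connected component can always be engulfed in a larger contractible, causally convex, globally hyperbolic region is false. Consider $M=\mathbb{R}_t\times\mathbb{S}^1\times\mathbb{R}^2$ with the product metric, $p=(0,0,0,0)$ and $q=(T,\pi,0,0)$ with $T>2\pi$. For every $\theta\in[0,2\pi)$ there is a causal curve from $p$ to $q$ passing through $(T/2,\theta,0,0)$, so any causally convex set containing small diamonds around $p$ and $q$ contains the full spatial circle $\{T/2\}\times\mathbb{S}^1\times\{0\}$, which is not null-homotopic in $M$ and hence not in any subset of $M$; the set cannot be contractible. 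Your ``thin tube on $\Sigma$'' idea also fails for a more elementary reason: even if you connect two small Cauchy discs by a tube $S\subset\Sigma$, the Cauchy development $D^M(S)$ is only as ``tall'' as the thinnest part of the tube allows and will not, in general, reach back (or forward) far enough to contain the original supports. Finally, your appeal to ``contractibility of the Cauchy slices'' is precisely what is \emph{not} available here: the whole point of the paper is to treat spacetimes with $H^2(M)\neq\{0\}$, which forces $\Sigma$ to be non-contractible.

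The paper avoids this obstruction by inserting a time-slice step before any partition of unity: choose a Cauchy surface $\Sigma_Z$ to the future of both supports, pick $\chi^\pm$ adapted to a thin slab around $\Sigma_Z$, and replace $\omega$ by $\tilde\omega=\omega-\Box\chi^-G_+\omega$ (and similarly $\omega'$). By EOM~1)+EOM~2) one has $\widehat{\bf F}(\omega)=\widehat{\bf F}(\tilde\omega)$, and now both $\tilde\omega,\tilde\omega'$ are compactly supported inside the slab. On $\Sigma_Z$ the compact set $K=(J^+(\supp\omega)\cup J^+(\supp\omega'))\cap\Sigma_Z$ is covered by finitely many contractible $U_i$ and then by a refinement $\{V_k\}$ with the property that any two intersecting $V_k,V_{k'}$ lie in a common $U_i$. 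The crucial point is that after time-slicing and refining, the only pairs $(\psi_k\tilde\omega,\psi_{k'}\tilde\omega')$ with non-trivial commutator are those with $D^M(V_k)\cap D^M(V_{k'})\neq\emptyset$, and for those the refinement \emph{guarantees} a common contractible patch $D^M(U_i)$; no global ``joining'' across $\Sigma$ is ever attempted. Your argument can be repaired by inserting exactly this propagation-to-a-slab step before the partition of unity; without it the pairwise reduction does not close.
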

\begin{proof}
We sketch here the main steps of the proof pointing an interested reader to \cite{Lang} for the details of some lengthy albeit straightforward computations. Choose a Cauchy surface \(\Sigma_Z\) in the future of \(\supp(\omega)\) and \(\supp(\omega')\) and consider the compact set \(K:=\Big(J^+\big(\supp(\omega)\big)\cap\Sigma_Z\Big)\cup\Big(J^+\big(\supp(\omega')\big)\cap\Sigma_Z\Big)\). Cover \(K\) with finitely many contractible open subsets \(U_i\), $i=1,...,n<\infty$ of \(\Sigma_Z\) whose Cauchy developments will be called \(D^M(U_i)\). Without loss of generality, all \(D^M(U_i)\) belong to the chosen cover of \(M\). Let \(V_k\) be a finite refinement of \(U_i\) such that
\begin{align}
\begin{split}\label{refinement}
&\exists i\in\mathbb{N}\enspace\text{such that}\enspace V_k\subset U_i,\\
&V_k\cap V_{k'}\not=\emptyset\Longrightarrow\exists i'\enspace\text{such that}\enspace U_{i'}\supset V_k\cup V_k'.
\end{split}
\end{align}
Such a refinement exists, because, whenever (\ref{refinement}) is not fulfilled by two sets \(V_k\) and \(V_l\), we can replace them with finitely many other sets satisfying such condition and all other constraints of our construction. 
\begin{figure}[ht!]
\centering
\includegraphics[height=8cm, width=10cm]{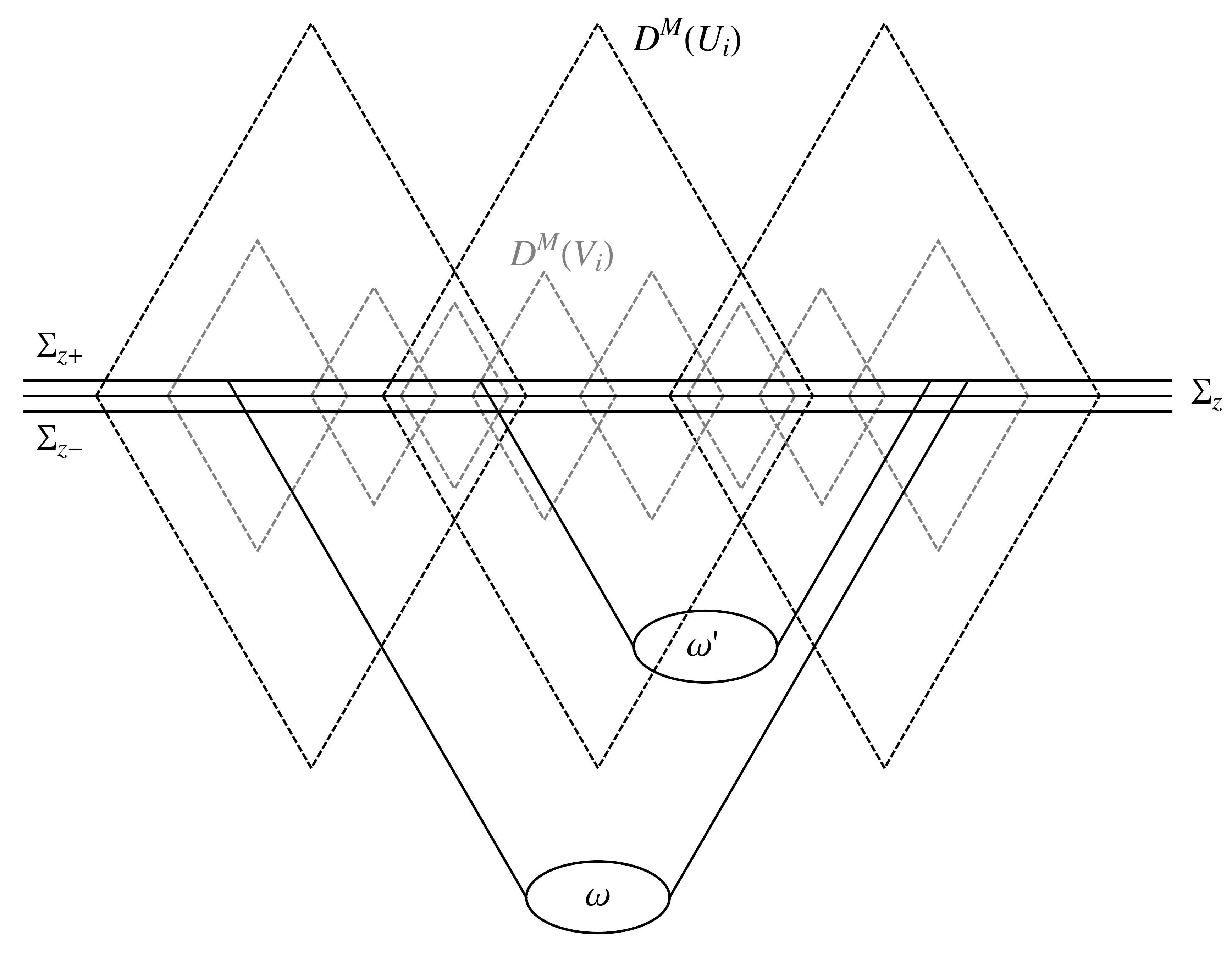}
\caption*{\scriptsize{Schematic description of the geometric loci employed in the proof of proposition \ref{propagator}.}}
\end{figure} 
\noindent Construct the Cauchy developments \(D^M(V_k)\) of each set $V_k$ in the refinement; automatically it holds that there exists $i$, such that \(D^M(V_k)\subset D^M(U_i)\). If \(D^M(V_k)\cap D^M(V_{k'})=\emptyset\) then \(D^M(V_k)\) and \(D^M(V_{k'})\) are spacelike separated. If instead \(D^M(V_k)\cap D^M(V_{k'})\not=\emptyset\) then there exists an \(i\) such that \(D^M(U_i)\supset D^M(V_k)\cup D^M(V_{k'})\). Since there exist finitely many sets \(D^M(V_k)\), we can always find a spacelike Cauchy surface \(\Sigma_{Z+}\) in the future of \(\Sigma_Z\) and a spacelike Cauchy surface \(\Sigma_{Z-}\) in the past such that \(J^+\big(\supp(\omega)\big)\cap\Sigma_{Z+}\), \(J^+\big(\supp(\omega')\big)\cap\Sigma_{Z+}\), \(J^M_+\big(\supp(\omega)\big)\cap\Sigma_{Z-}\) and \(J^M_+\big(\supp(\omega')\big)\cap\Sigma_{Z-}\) are all contained in \(\bigcup_k D^M(V_k)\).
Now, let $\chi^+,\chi^-\in C^\infty(M)$ be chosen in such a way that $\chi^++\chi^-=1$ and $\chi^+$ is identically $1$ in  \(J^+(\Sigma_{Z+})\) and $0$ in $J^-(\Sigma_{Z-})$. Consider \(\tilde{\omega}:=\omega-\Box\chi^-G_+\omega\) and \(\tilde{\omega}':=\omega'-\Box\chi^-G_+\omega'\). On account of the construction and of the properties of \(\chi^-\), \(G_+\), \(\tilde{\omega}\) and \(\tilde{\omega}'\) are compactly supported and their supports lie in \(\bigcup_kD^M(V_k)\). Choose a partition of unity \(\{\psi_k\}\) belonging to \(\{D^M(V_k)\}\). On account of the properties  EOM 1) and EOM 2), $\widehat{\bf F}(\omega)=\widehat{\bf F}(\tilde\omega)$ and $\widehat{\bf F}(\omega')=\widehat{\bf F}(\tilde\omega')$. Hence
$$\Big[\widehat{\bf{F}}(\tilde{\omega}), \widehat{\bf{F}}(\tilde{\omega}')\Big]=\sum_{k,k'}\Big[\widehat{\bf{F}}(\psi_k\tilde{\omega}), \widehat{\bf{F}}(\psi_{k'}\tilde{\omega}')\Big]=\sum_{k\sim k'}\Big[\widehat{\bf{F}}(\psi_k\tilde{\omega}), \widehat{\bf{F}}(\psi_{k'}\tilde{\omega}')\Big],$$
where $k\sim k'$ means that we consider only the pairs $(k,k')$ such that $D^M(V_k)\cap D^M(V_{k'})\neq\emptyset$ and where in the last equality we used that spacelike separated observables do commute. Hence 
\begin{align*}
\sum_{k\sim k'}\Big[\widehat{\bf{F}}(\psi_k\tilde{\omega}), \widehat{\bf{F}}(\psi_{k'}\tilde{\omega}')\Big]
= \sum_{k\sim k'}i\,\left(\int_MG\delta(\psi_k\tilde{\omega})\wedge*\delta(\psi_{k'}\tilde{\omega}')\right)\,1_{\mathfrak{F}_u(M)}=\\
=\sum_{k,k'}i\,\left(\int_MG\delta(\psi_k\tilde{\omega})\wedge*\delta(\psi_{k'}\tilde{\omega}')\right)\,1_{\mathfrak{F}_u(M)}
=i\,\left(\int_MG\delta\tilde{\omega}\wedge*\delta\tilde{\omega}'\right)\,1_{\mathfrak{F}_u(M)}
\end{align*}
$$=i\,\left(\int_MG\delta\omega\wedge*\delta\omega'\right)\,1_{\mathfrak{F}_u(M)},$$
where, in the second equality we consider all possible values for $k$ and $k'$ since the additional ones contribute $0$ to the integral. Notice that in the various identities we used the fact that all test functions are compactly supported, that $G$ commutes with both $d$ and $\delta$ and that all sums are over a finite set of indices.
\end{proof}

%===============================================================================================================================
%Time slice axiom=================================================================
%===============================================================================================================================
\subsection{The time slice axiom}
We have shown that $\mathfrak{F}_u(M)$ enjoys all the properties wanted by a genuine algebra of observables; hence we can start investigating its additional features. To start with,
\begin{lem}\label{time slice axiom}
The universal algebra \(\mathfrak{F}_u(M)\) satisfies the time slice axiom, that is, if \(\Sigma\) is a Cauchy surface of \((M, g)\) and \(\mathcal{O}\) a globally hyperbolic subset of \(M\) containing \(\Sigma\), it holds that \(\mathfrak{F}_u(\mathcal{O})=\mathfrak{F}_u(\mathcal{M})\).
\end{lem}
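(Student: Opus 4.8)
The plan is to establish the two inclusions \(\mathfrak{F}_u(\mathcal{O})\subseteq\mathfrak{F}_u(M)\) and \(\mathfrak{F}_u(M)\subseteq\mathfrak{F}_u(\mathcal{O})\) separately. The first is nothing but isotony: the inclusion \(\mathcal{O}\hookrightarrow M\) induces a unital \(*\)-homomorphism \(\mathfrak{F}_u(\mathcal{O})\to\mathfrak{F}_u(M)\), so the genuine content lies in the converse. Since, by proposition \ref{universal algebra properties}, \(\mathfrak{F}_u(M)\) is generated by the global smeared operators \(\widehat{\textbf{F}}(\omega)\), \(\omega\in\Omega^2_0(M,\mathbb{C})\), it suffices to show that each such generator already lies in \(\mathfrak{F}_u(\mathcal{O})\), i.e. that \(\widehat{\textbf{F}}(\omega)\) can be rewritten as a global operator smeared with a test form whose support is contained in \(\mathcal{O}\).

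The first tool I would exploit is that the equations of motion annihilate every \(\Box\)-exact test form: for any \(\psi\in\Omega^2_0(M,\mathbb{C})\) one has \(\Box\psi=-d\delta\psi-\delta d\psi\), whence \(\widehat{\textbf{F}}(\Box\psi)=-\widehat{\textbf{F}}\big(d(\delta\psi)\big)-\widehat{\textbf{F}}\big(\delta(d\psi)\big)=0\) by EOM 1) and EOM 2), because \(\delta\psi\in\Omega^1_0\) and \(d\psi\in\Omega^3_0\). Consequently \(\widehat{\textbf{F}}(\omega)=\widehat{\textbf{F}}(\omega-\Box\psi)\) for every compactly supported \(\psi\), and the whole problem reduces to choosing \(\psi\) so that \(\omega-\Box\psi\) is supported in \(\mathcal{O}\); this is exactly the deformation mechanism already used in the proof of proposition \ref{propagator}.

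For the geometric set-up I would invoke the Bernal--S\'anchez deformation theory \cite{Bernal} to produce two smooth spacelike Cauchy surfaces \(\Sigma_-\) and \(\Sigma_+\) of the \emph{entire} spacetime \((M,g)\), lying respectively to the past and to the future of \(\Sigma\), both contained in \(\mathcal{O}\) and such that the sandwich region \(J^+(\Sigma_-)\cap J^-(\Sigma_+)\) is still contained in \(\mathcal{O}\). Picking \(\chi^+\in C^\infty(M)\) with \(\chi^+\equiv 1\) on \(J^+(\Sigma_+)\) and \(\chi^+\equiv 0\) on \(J^-(\Sigma_-)\), and setting \(\chi^-:=1-\chi^+\), I define
\[ \psi:=\chi^- G^+\omega+\chi^+ G^-\omega,\qquad \widetilde{\omega}:=\omega-\Box\psi. \]
Using \(\supp(G^\pm\omega)\subseteq J^\pm(\supp\omega)\) together with the cutoffs, one checks that \(\supp(\chi^- G^+\omega)\subseteq J^+(\supp\omega)\cap J^-(\Sigma_+)\) and \(\supp(\chi^+ G^-\omega)\subseteq J^-(\supp\omega)\cap J^+(\Sigma_-)\), both compact by global hyperbolicity, so \(\psi\in\Omega^2_0(M,\mathbb{C})\) and the previous paragraph yields \(\widehat{\textbf{F}}(\widetilde{\omega})=\widehat{\textbf{F}}(\omega)\). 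Finally, using \(\Box G^\pm\omega=\omega\), the identity \([\Box,\chi^-]=-[\Box,\chi^+]\) and \(\Box G\omega=0\), a short computation collapses \(\widetilde{\omega}\) to
\[ \widetilde{\omega}=[\Box,\chi^+]\,G\omega=\Box(\chi^+ G\omega), \]
a first-order expression supported in \(\supp(\nabla\chi^+)\cap\supp(G\omega)\), hence inside the sandwich region and therefore in \(\mathcal{O}\). By the local compatibility lemma this exhibits \(\widehat{\textbf{F}}(\omega)=\widehat{\textbf{F}}(\widetilde{\omega})\) as an element of \(\mathfrak{F}_u(\mathcal{O})\), giving the non-trivial inclusion.

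I expect the genuinely delicate step to be the geometric one, namely guaranteeing the existence of Cauchy surfaces \(\Sigma_\pm\) of the whole of \(M\) that simultaneously sit inside \(\mathcal{O}\) and bound a sandwich contained in \(\mathcal{O}\); it is essential that they be Cauchy for \(M\) (not merely for \(\mathcal{O}\)), since this is precisely what forces \(\chi^-\) to vanish on the entire chronological future of \(\supp\omega\) far enough ahead and thus makes \(\psi\) compactly supported. The compactness of the two intersection sets, which is a standard consequence of global hyperbolicity, is the companion subtlety. Once the connected case is settled, the disconnected case with finitely many components follows componentwise, exactly as in proposition \ref{Cauchy problem field strength tensor}.
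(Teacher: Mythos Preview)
Your argument is essentially the paper's: the deformation \(\omega\mapsto\omega-\Box(\chi^+G^-\omega+\chi^-G^+\omega)\) and the use of EOM~1) and EOM~2) to kill \(\Box\)-exact test forms are exactly what the paper does, and your additional rewriting \(\widetilde{\omega}=[\Box,\chi^+]G\omega\) is a correct and helpful way to read off the support.

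The one point where you are asking for more than is needed---and more than may be available---is the geometric set-up. You require Cauchy surfaces \(\Sigma_\pm\) of \(M\) with the \emph{entire} sandwich \(J^+(\Sigma_-)\cap J^-(\Sigma_+)\) contained in \(\mathcal{O}\). For a generic open globally hyperbolic neighbourhood \(\mathcal{O}\) of \(\Sigma\) (which may become arbitrarily thin at spatial infinity when \(\Sigma\) is non-compact) this global condition is not obviously satisfiable, and Bernal--S\'anchez does not hand it to you directly. The paper sidesteps this by exploiting that only the portion of the transition region meeting \(J(\supp\omega)\) matters: since \(J^\pm(\supp\omega)\cap\Sigma\) is compact, one can choose \(\Sigma_f,\Sigma_p\) so that merely \(J^\pm(\supp\omega)\cap\Sigma_{f,p}\subset\mathcal{O}\), which suffices because \(\supp\widetilde{\omega}\subset\supp(\nabla\chi^+)\cap J(\supp\omega)\). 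You already have this intersection in your support computation, so the fix is simply to weaken your hypothesis on \(\Sigma_\pm\) accordingly; once you do, the argument is identical to the paper's.
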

\begin{proof}
Let \(\mathcal{O}(\Sigma)\) be an open neighbourhood of \(\Sigma\). It is sufficient to show that for every \(\omega\in\Omega^2_0(M, \mathbb{C})\) there exists a \(\omega'\in\Omega^2_0(M, \mathbb{C})\) with \(\supp(\omega')\subset\mathcal{O}(\Sigma)\) such that \(\widehat{\bf{F}}(\omega)=\widehat{\bf{F}}(\omega')\). Since \(\mathcal{O}(\Sigma)\) is an open neighbourhood of \(\Sigma\) and \(J^\pm\big(\supp(\omega)\big)\cap\Sigma\) is compact, there exist Cauchy surfaces \(\Sigma_f\) and \(\Sigma_p\) respectively in the future and in the past of \(\Sigma\) such that \(J^\pm\big(\supp(\omega)\big)\cap\Sigma_f\subset\mathcal{O}(\Sigma)\) and \(J^\pm\big(\supp(\omega)\big)\cap\Sigma_p\subset\mathcal{O}(\Sigma)\). Let \(\chi^+,\chi^-\) lie in $C^\infty(M)$ and let us fix them in such a way that $\chi^+ + \chi^-=1$ and that $\chi^+$ vanishes in the past of $\Sigma_p$, whereas it is equal to $1$ in the future of $\Sigma_f$. Then, if we define
\begin{align*}
\omega'=\omega-\Box\chi^+G_-\omega-\Box\chi^-G_+\omega,
\end{align*}
it holds that \(\supp(\omega')\subset\mathcal{O}(\Sigma)\) is compact due to the properties of \(\chi^\pm\) and \(G_\pm\). Furthermore, from the conditions EOM 1) and EOM 2) on $\mathfrak{F}_u(M)$, it follows that \(\widehat{\bf{F}}(\omega)=\widehat{\bf{F}}(\omega')\).
\end{proof}

%===============================================================================================================================
%Centre of the field algebra==========================================================
%===============================================================================================================================
\subsection{The centre of $\mathfrak{F}_u(M)$}
The aim of this subsection is to investigate a distinguishing aspect of the universal algebra, namely the appearance of new features which have no counterpart in the local algebras, above dubbed as $\mathfrak{F}_c(M_i)$. From a technical point of view, this translates in the existence of a non trivial centre in $\mathfrak{F}_u(M)$, that is there exists a non trivial subalgebra whose elements are commuting with all those of the universal algebra. Yet we want to stress that this happens only if the topology of the underlying background is rather peculiar, namely if $H^2(M)\neq\{0\}$. If, on the contrary, the second de Rham cohomology group is trivial, then the equation $dF=0$ in \eqref{Maxwell} entails the existence of a global one-form $A$ such that $F=dA$. In this case the field algebra of the field strength tensor could be globally defined as the differential of that of the vector potential and no non-trivial centre would appear.

Therefore, we will henceforth assume that $H^2(M)\neq\{0\}$ and, with the next lemma, we show how to characterize the elements of the centre of \(\mathfrak{F}_u(M)\).
\begin{prop}\label{centre} An algebra element \(\widehat{\bf{F}}(\omega)\) lies in the centre of \(\mathfrak{F}_u(M)\) if and only if \(\omega=\alpha+\beta\) with \(\alpha\in\Omega^2_{0,\delta}(M, \mathbb{C})\) and \(\beta\in\Omega^2_{0,d}(M, \mathbb{C})\).
\end{prop}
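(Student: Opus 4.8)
The plan is to reduce the statement about the centre to a purely differential-geometric condition on $\omega$ by exploiting the explicit form of the commutator. Since $\mathfrak{F}_u(M)$ is generated by the operators $\widehat{\bf F}(\omega')$ and since, by Proposition \ref{propagator}, every commutator is already a multiple of the identity (hence central), $\widehat{\bf F}(\omega)$ lies in the centre if and only if it commutes with every generator, i.e. if and only if
$$\int_M G\delta\omega\wedge*\delta\omega'=0\qquad\forall\,\omega'\in\Omega^2_0(M,\mathbb{C}).$$
First I would integrate by parts, using that $\delta$ is the formal adjoint of $d$ for the pairing $(\phi,\psi)\mapsto\int_M\phi\wedge*\psi$ and that $\omega'$ is compactly supported (so that Stokes' theorem produces no boundary term even though $G\delta\omega$ is not of compact support), rewriting the integrand as $\int_M d(G\delta\omega)\wedge*\omega'$. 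As this vanishes for all compactly supported $\omega'$ and the pairing is non-degenerate, the centrality condition is equivalent to $d(G\delta\omega)=0$. Finally, using that $G$ commutes with $d$ and that $\ker\big(G|_{\Omega^2_0}\big)=\Box\,\Omega^2_0(M,\mathbb{C})$, this becomes the clean criterion $d\delta\omega=\Box\zeta$ for some $\zeta\in\Omega^2_0(M,\mathbb{C})$.

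For the direction ``$\Leftarrow$'' I would just compute: if $\omega=\alpha+\beta$ with $\delta\alpha=0$ and $d\beta=0$, then $d\delta\omega=d\delta\beta=-\Box\beta$, since $\Box\beta=-(d\delta+\delta d)\beta=-d\delta\beta$ when $d\beta=0$; hence the criterion holds with $\zeta=-\beta$.

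The substance is in ``$\Rightarrow$''. Assuming $d\delta\omega=\Box\zeta$, I would set $\beta:=-\zeta\in\Omega^2_0(M,\mathbb{C})$, so that $\Box\beta=-d\delta\omega$. Applying $d$ gives $\Box(d\beta)=-d\,d\delta\omega=0$; since $d\beta$ is compactly supported and compactly supported solutions of $\Box$ vanish identically, it follows that $d\beta=0$, i.e. $\beta\in\Omega^2_{0,d}(M,\mathbb{C})$. I then set $\alpha:=\omega-\beta\in\Omega^2_0(M,\mathbb{C})$ and check $\delta\alpha=0$: from $\Box\beta=-d\delta\beta$ (using $d\beta=0$) one gets $d\delta\beta=d\delta\omega$, whence $d\delta\alpha=0$; together with $\delta\delta\alpha=0$ this makes $\delta\alpha$ both closed and coclosed, so $\Box(\delta\alpha)=0$, and since $\delta\alpha$ is compactly supported the same vanishing argument forces $\delta\alpha=0$. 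Thus $\omega=\alpha+\beta$ is the desired decomposition.

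The main obstacle—really the only delicate point—is recognizing that one should not try to construct $\beta$ by hand, but rather read it off as the automatically unique $\Box$-preimage of $-d\delta\omega$, and then extract \emph{both} $d\beta=0$ and $\delta\alpha=0$ for free from the single fact that the only compactly supported harmonic forms are zero. The only other care needed is the legitimacy of the integration by parts above, which is unproblematic precisely because $\omega'$ is compactly supported.
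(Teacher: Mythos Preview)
Your proof is correct and follows essentially the same route as the paper: reduce centrality to $Gd\delta\omega=0$ via non-degeneracy of the pairing, translate this through $\ker G=\Box\,\Omega^2_0$ into the existence of a compactly supported preimage, and then extract the closedness/coclosedness conditions from the fact that compactly supported $\Box$-harmonic forms vanish. The only cosmetic difference is that the paper also records the companion identity $\delta d\omega=\Box\alpha$ (equivalent to yours via $G\Box=0$) and recovers $\omega=-\alpha-\beta$ directly through $G_\pm$, whereas you define $\alpha:=\omega-\beta$ and verify $\delta\alpha=0$ by hand; the substance is identical.
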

\begin{proof}
\(\widehat{\bf{F}}(\omega)\) is in the centre of \(\mathfrak{F}_u(M)\) if and only if \(\Big[\widehat{\bf{F}}(\omega), \widehat{\bf{F}}(\omega')\Big]=\int_M\delta G\omega\wedge*\delta\omega'=\int_Md\delta G\omega\wedge*\omega'=0\) for all \(\omega'\in\Omega^2_0(M, \mathbb{C})\). Since $\Omega^2_0(M,\mathbb{C})$ comes endowed with the non-degenerate scalar product \(\big<\omega,\omega'\big>=\int_M\omega\wedge*\omega'\),  then the commutator between $\widehat{\bf{F}}(\omega)$ and $\widehat{\bf{F}}(\omega')$ vanishes if and only if $Gd\delta\omega=0=G\delta d\omega$. In turn, this last equality holds if and only if \(\delta d\omega=\Box\alpha\) and \(d\delta\omega=\Box\beta\), \(\alpha,\beta\in\Omega^2_0(M, \mathbb{C})\). We can exploit the properties of the Green's functions to conclude that the following chains of identities hold $0=G_\pm\delta\delta d\omega=G_\pm\Box\delta\alpha=\delta\alpha$ and equivalently $0=G_\pm dd\delta\omega=G_\pm\Box d\beta=d\beta$. Furthermore, it holds true that \(\omega=G_\pm\Box\omega=G_\pm(-\delta d\omega-d\delta\omega)=-G_\pm\Box(\alpha+\beta)=-\alpha-\beta\).
\end{proof}
Notice that the proposition guarantees that the centre is trivial if and only if \(H^2(M,\mathbb{C})=\{0\}\) since, in this case, the closedness of $\alpha$ and the coclosedness of $\beta$ would guarantee the existence of $\eta\in\Omega^3_0(M,\mathbb{C})$ and of $\theta\in\Omega^1_0(M,\mathbb{C})$ such that $\omega=d\theta+\delta\eta$. Under this assumption, on account of ${\rm EOM\, 1)}$ and of ${\rm EOM\, 2)}$ for $\mathfrak{F}_u(M)$, the field strength operator vanishes. In order to better understand this feature, it is worth to construct explicitly non trivial elements of the centre whenever \(0<\dim(H^2(M,\mathbb{C}))<\infty\), the latter bound being assumed only for the sake of simplicity. Notice that in the forthcoming analysis we will work with real forms, thus dropping the reference to $\mathbb{C}$; this does not clash with the previous results and it is assumed still only for the sake of simplicity. Out of the non-degenerateness of the scalar product on $H^2(M)$, $M$ being four dimensional, and out of Poincar\'{e} duality, \cite[Chap. 1]{BoTu}, the following chain of isomorphisms holds true:
\begin{align*}
(H^2(M))^*\cong H^2(M)\cong(H^2_{c}(M))^*\cong H^2_{c}(M),
\end{align*}
where the subscript $c$  here stands for compact support. Notice that, in the first and in the third isomorphism, the hypothesis of $H^2(M)$ being finite dimensional plays a key role. To wit both $H^2(M)$ and $H^2_c(M)$ are finite-dimensional vector spaces and hence isomorphic to their dual. Consequently every element $\lambda$ of \((H^2(M))^*\) can be represented as  
\begin{align*}
H^2(M)\ni [F]\longmapsto \lambda([F])=\int_M F\wedge\eta.
\end{align*}
Notice that the symbol $[F]$ to indicate an equivalence class in $H^2(M)$ has been chosen for a notational reason which will be manifest in the forthcoming discussion. Furthermore, on the right hand side, $F$ is an arbitrary representative of $[F]$ as well as $\eta$ is an arbitrary representative of a unique equivalence class \([\eta]\in H^2_c(M)\). By direct inspection, one can realize that the integral does not depend on the various choices. Since every \([z]\in H^\infty_2(M)\) defines a linear map $\int_z:H^2(M)\longrightarrow\mathbb{R}$, there exists a unique \([\omega_z]\in H^2_c(M)\) such that  
\begin{align*}
\int_z\omega=\int_M F\wedge\omega_z,\enspace\forall [F]\in H^2(M) 
\end{align*}
where all formulas are independent from the choice of a representative in the various equivalence classes. We can interpret the above remarks as follows: On account of the hypothesis \(H^2(M)\neq\{0\}\), there exists at least an equivalence class of non-exact field strength tensor $[F]$. As a result of that, there exists \([z]\in H^\infty_2(M)\) and \([\omega_z]\in H^2_c(M)\) fulfilling regardless of the chosen representative
\begin{align*}
\int_zF=\int_MF\wedge\omega_z\neq 0.
\end{align*}
Hence we have constructed a classical field strength $F$ whose associated algebra element \(\widehat{\bf{F}}(\omega_z)\) is a non-trivial element of the centre which can be interpreted as the magnetic flux through the 2-cycle \(z\). The very same discussion holds true also for \(*\omega_z\) in place of \(\omega_z\) because of \(\int_M F\wedge*\omega_z=\int_M *F\wedge\omega_z=\int_z*F\) for all \([F]\in H^2(M)\). From a physical point of view \(\widehat{\bf{F}}(*\omega_z)\) can be interpreted as the electric flux through \(z\). We would like to draw the attention to the fact that these non-trivial elements of the algebra give rise to superselection sectors as discussed in \cite{Ash80}.%===============================================================================================================================
%Local covariant field theory==========================================================
%===============================================================================================================================
\subsection{Maxwell field as a local covariant quantum field theory}
As the very last point of our investigation on the algebra of observables for the free Maxwell field, we address the question whether it defines a local covariant quantum field theory as per definition 2.1 in \cite{BFV}. In this section we shall use both the terminology and the nomenclature of this last cited paper; we refer to it for an extensive analysis and here we recollect instead just the definition of the main ingredients we need:
\begin{itemize}
\item $\mathfrak{GlobHyp}$: the category whose objects are $(M,g)$, that is four dimensional oriented and time oriented globally hyperbolic spacetimes with at most finitely many connected components, endowed with a smooth metric of signature $(+,-,-,-)$. A morphism between two objects $(M,g)$ and $(M',g')$ is a smooth embedding $\mu:M\to M'$ such that $\mu(M)$ is causally convex\footnote{We recall that an open subset $\mathcal{O}$ of a globally hyperbolic spacetime is called {\em causally convex} if $\forall x,y\in\mathcal{O}$ all causal curves connecting $x$ to $y$ lie entirely inside $\mathcal{O}$.}, preserves orientation and time orientation and $\mu^*g'=g$ on $M$.
\item $\mathfrak{GlobHyp}_2$: the subcategory of $\mathfrak{GlobHyp}$ whose objects are those $(M,g)\in{\rm Obj}(\mathfrak{GlobHyp})$ and $H^2(M)=\{0\}$. A morphism between two objects $(M,g)$ and $(M',g')$ is a smooth embedding $\mu:M\to M'$ such that $\mu(M)$ is causally convex, preserves orientation and time orientation and $\mu^*g'=g$ on $M$. Notice that, since $\mu(M)$ is diffeomorphic to $M$, its cohomology groups are isomorphic to those of $M$ -- \cite[Corol. 11.3]{Lee}.
\item $\mathfrak{Alg}$: the category whose objects are unital $*$-algebras whereas morphisms are injective unit-preserving $*$-homomorphisms.
\end{itemize}
 Since the composition map between morphisms and the existence of an identity map are straightforwardly defined in every case we shall consider, we will omit them. We shall start proving a weaker form of general local covariance, where the class of spacetimes we consider is not the most general one. We wish to postpone the explanation for this choice to after the proof of the following proposition since we feel that reading it will make our point clearer than an abstract a priori argument.
 
\begin{prop}
There exists a covariant function ${\rm F}_u:\mathfrak{GlobHyp}_2\longrightarrow\mathfrak{Alg}$ which assigns to every object $(M,g)$ in $\mathfrak{GlobHyp}_2$ the $*$-algebra $\mathfrak{F}_u(M)$ with the induced action on the morphisms. 
In diagrammatic form:
\begin{equation*}
\begin{CD}
 (M,g)@>{\mu}>> (M',g') \\
 @V{{\rm F}_{u}}VV        @VV{{\rm F}_{u}}V\\
\mathfrak{F}_u(M) @>{\alpha_\mu}>> \mathfrak{F}_u(M')
\end{CD}
\end{equation*}
Here $\alpha_\mu$ is the unit-preserving $*$-homomorphism defined by its action on the generators as $\alpha_\mu\left(\widehat{\bf F}(\omega)\right):= \widehat{\bf F}(\mu_*\omega)$ where $\mu_*\omega$ is the pull-back of $\omega$ via $\mu^{-1}:\mu(M)\to M$.
Furthermore, such local covariant quantum field theory is causal and it fulfils the time slice axiom.
\end{prop}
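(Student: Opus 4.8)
The plan is to verify that the assignment $\mathrm{F}_u:\mathfrak{GlobHyp}_2\longrightarrow\mathfrak{Alg}$ is a genuine covariant functor and then check that the resulting theory is causal and satisfies the time slice axiom. The functoriality breaks into three tasks. First, each object $(M,g)$ of $\mathfrak{GlobHyp}_2$ is indeed sent to an object of $\mathfrak{Alg}$, which is immediate since $\mathfrak{F}_u(M)$ was constructed as a unital $*$-algebra in Theorem~\ref{universal algebra} and the preceding subsections. Second, each morphism $\mu:(M,g)\to(M',g')$ must be sent to a morphism $\alpha_\mu$ in $\mathfrak{Alg}$, i.e. to an injective unit-preserving $*$-homomorphism. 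Third, I would check the two compositional axioms: $\alpha_{\mu'\circ\mu}=\alpha_{\mu'}\circ\alpha_\mu$ for composable morphisms, and $\alpha_{\mathrm{id}_M}=\mathrm{id}_{\mathfrak{F}_u(M)}$.

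For the well-definedness and morphism property of $\alpha_\mu$, I would first observe that since $\mu$ is an isometric embedding onto a causally convex open subset $\mu(M)\subseteq M'$, the pushforward $\mu_*\omega$ (extension by zero of the pulled-back form $(\mu^{-1})^*\omega$) lands in $\Omega^2_0(M',\mathbb{C})$ and commutes with $d$, $\delta$ and $*$ up to the isometry. This makes $\alpha_\mu(\widehat{\bf F}(\omega)):=\widehat{\bf F}(\mu_*\omega)$ respect the defining relations EOM~1), EOM~2) and COMM) of Definition~\ref{fieldal}: the equations of motion are preserved because $\mu_*$ intertwines the differential operators, and the commutator is preserved because causal convexity guarantees that the causal propagator $G$ on the image agrees with the restriction of $G'$, so that $\int_{M}G\delta\omega\wedge*\delta\omega'=\int_{M'}G'\delta(\mu_*\omega)\wedge*\delta(\mu_*\omega')$. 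Hence $\alpha_\mu$ is a well-defined unital $*$-homomorphism. Unit preservation is built into the definition. For injectivity I would invoke the universal/local structure: since $\mu$ is an embedding and isotony holds at the level of local algebras, $\alpha_\mu$ is realised as the restriction/inclusion of generators and is therefore injective — this is the place where one uses that $\mathfrak{F}_u$ was assembled precisely to make such embeddings injective (cf. the isotony discussion after Definition~\ref{fieldal}).

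The compositional axioms then follow by checking them on generators: $\alpha_{\mu'\circ\mu}(\widehat{\bf F}(\omega))=\widehat{\bf F}((\mu'\circ\mu)_*\omega)=\widehat{\bf F}(\mu'_*\mu_*\omega)=\alpha_{\mu'}(\widehat{\bf F}(\mu_*\omega))=\alpha_{\mu'}\circ\alpha_\mu(\widehat{\bf F}(\omega))$, using functoriality of pushforward of forms, and $\alpha_{\mathrm{id}}$ acts as the identity on generators since $(\mathrm{id})_*\omega=\omega$. Because these homomorphisms agree on the generating set of $\mathfrak{F}_u(M)$ and are $*$-homomorphisms, they agree everywhere. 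Causality and the time slice axiom require essentially no new work: causality is exactly the principle of locality established in Proposition~\ref{universal algebra properties}, transported along the isometric embedding (spacelike-separated test forms push forward to spacelike-separated test forms since $\mu$ preserves the causal structure on the causally convex image), and the time slice axiom was already proved in Lemma~\ref{time slice axiom} for each object.

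The main obstacle, and the reason the authors restrict to $\mathfrak{GlobHyp}_2$ rather than all of $\mathfrak{GlobHyp}$, is injectivity of $\alpha_\mu$. The issue is precisely the centre: if $H^2(M)\neq\{0\}$ while $H^2(M')=\{0\}$, a central generator $\widehat{\bf F}(\omega_z)$ detecting a nontrivial $2$-cycle in $M$ — shown in Section~3.6 to be a nonzero central element — can be pushed into an $M'$ where that cycle bounds, so $\mu_*\omega_z$ becomes a sum $d\theta+\delta\eta$ of compactly supported exact and coexact pieces and is therefore annihilated by EOM~1) and EOM~2) in $\mathfrak{F}_u(M')$. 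Thus $\alpha_\mu$ would kill a nonzero element, violating injectivity. By demanding $H^2(M)=\{0\}$ for every object, the centre is trivial (by the remark following Proposition~\ref{centre}), so no such obstruction arises and $\alpha_\mu$ is injective; this is the delicate point the proof must foreground, and it is exactly what forces the weaker category $\mathfrak{GlobHyp}_2$ and motivates the failure of full general local covariance discussed afterwards.
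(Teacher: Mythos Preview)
Your proposal follows essentially the same strategy as the paper: verify that $\alpha_\mu$ is a well-defined unital $*$-homomorphism by checking that $\mu_*$ intertwines $d$, $\delta$, $*$ and the causal propagators (the paper does this via the local covering by contractible patches, you do it directly on the global generators, but the content is the same), then argue injectivity, then read off causality and the time-slice axiom from Proposition~\ref{universal algebra properties} and Lemma~\ref{time slice axiom}. The compositional axioms you check explicitly are left implicit in the paper, which is a harmless difference.

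The one place where your argument wobbles is injectivity. In your second paragraph you appeal to ``isotony at the level of local algebras'' and to the remark after Definition~\ref{fieldal}, as though injectivity of the embeddings $\alpha_{ij}:\mathfrak{F}(M_i)\to\mathfrak{F}(M_j)$ between contractible patches lifted automatically to injectivity of $\alpha_\mu:\mathfrak{F}_u(M)\to\mathfrak{F}_u(M')$. It does not: the universal algebra can acquire central elements that have no local witness in any single contractible patch, and these are precisely what can be killed by $\alpha_\mu$. The paper's proof handles this the way your final paragraph does: it explicitly invokes the \emph{absence of a centre} in both $\mathfrak{F}_u(M)$ and $\mathfrak{F}_u(M')$ (guaranteed by $H^2=\{0\}$ via the remark after Proposition~\ref{centre}) as the reason $\alpha_\mu$ is injective. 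The underlying mechanism, which neither you nor the paper spells out in full, is that triviality of the centre means the Lichnerowicz form is non-degenerate on $\Omega^2_0/(d\Omega^1_0+\delta\Omega^3_0)$, so $\mathfrak{F}_u(M)$ is a CCR-type algebra over a genuine symplectic space and hence simple; any unital $*$-homomorphism from a simple algebra into a nonzero algebra is automatically injective. Your last paragraph has the right idea---just move that reasoning up to replace the isotony appeal in the second paragraph, since that is where the restriction to $\mathfrak{GlobHyp}_2$ actually enters the proof.
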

\begin{proof}
As discussed at the beginning of the session, we can associate to each $(M,g)\in{\rm Obj}(\mathfrak{GlobHyp})$ the universal algebra along the lines of the previous section. Hence, if we consider any morphism $\mu$ between two objects $(M,g)$ and $(M',g')$, we can consider $(\mu(M),g'|_{\mu(M)})$ as a globally hyperbolic spacetime on its own. Since $\mu$ is an isometry, it means that any covering of $M$ via globally hyperbolic contractible subsets $M_i$, $i=1,...,n<\infty$ induces a cover of $\mu(M)$ via $\mu(M_i)$. It is easy to realize that $\mathfrak{F}_c(\mu(M_i))=\alpha_\mu\left(\mathfrak{F}_c(M_i)\right)$ where $\alpha_\mu$ acts on each generator $\widehat{\bf F}(\omega)$, $\omega\in\Omega^2_0(M_i)$ yielding $\widehat{\bf F}(\mu_*\omega)$. Notice that, since $d$ is independent from the metric and $\delta$ is constructed out of $d$ and of the Hodge dual $*$, they both commute with isometries. Hence $\mu_*d\widehat{\bf F}(\omega)=d\mu_*\widehat{\bf F}(\omega)$ and $\mu_*\delta\widehat{\bf F}(\omega)=\delta\mu_*\widehat{\bf F}(\omega)$. This also suffices to claim that, if we call $G_\mu$ the causal propagator of the $\Box$-operator on $\mu(M)$, it holds that $\mu_*\circ G=G_\mu\circ \mu_*$. Hence we can consider the commutator between two generators to prove
\begin{gather*}
\left[\alpha_\mu(\widehat{\bf F}(\omega)),\alpha_\mu(\widehat{\bf F}(\omega'))\right]=i\int_{\mu(M)}\!\!G_\mu(\mu_*\omega)\wedge *d\delta(\mu_*\omega')=\\
i\int_{\mu(M)}\!\!\mu_*(G\omega)\wedge \mu_*(*d\delta\omega')=\int_{\mu(M)}\!\!\mu_*(G\omega\wedge *d\delta\omega')=\\
=i\int_M G\omega\wedge *d\delta\omega'=\left[\widehat{\bf F}(\omega),\widehat{\bf F}(\omega')\right].
\end{gather*}
Since complex conjugation is not affected by isometric embeddings, we have proven that $\mu_*$ actually defines a unit preserving $*$-homomorphism between $\mathfrak{F}_c(M_i)$ and $\mathfrak{F}_c(\mu(M_i))$. We can now without loss of generality assume that the collection of $\mu(M_i)$ is part of a covering of $M'$ with globally hyperbolic contractible spacetimes. On account of the structural properties of the universal algebra and of the absence of a centre in both $\mathfrak{F}_u(M)$ and $\mathfrak{F}_u(M')$  this entails that $\alpha_\mu$ is indeed an injective $*$-homomorphism. Furthermore on account of the commutator being defined out of the causal propagator, the theory is causal and the time-slice axiom is fulfilled as already proven in lemma \ref{time slice axiom}.
\end{proof}

We need to answer why one is forced to restrict the attention to backgrounds with trivial second de Rham cohomology group. As one can realize from the above proof, if we would have considered $\mathfrak{GlobHyp}$, one would have to consider the homomorphism induced by the embedding $\mu$ from $M$ into $M'$. Since $M$ is diffeomorphic to $\mu(M)$, it is known that these two spacetimes have isomorphic cohomology groups, but we have to go one step further and see $\mu(M)$ as an open subset of $M'$. Here is the source of potential problems since, even if $H^2(M)\neq\{0\}$, there is no reason why $H^2(M')$ should be isomorphic to $H^2(M)$; actually it can also be trivial. 

We provide an explicit example: Let us consider the ultrastatic globally hyperbolic spacetime $M=\mathbb{R}\times(\frac{\pi}{4},\frac{3\pi}{4})\times\mathbb{S}^2$ endowed with the line element $ds^2=dt^2-d\chi^2-\sin^2\chi d\mathbb{S}^2(\theta,\varphi)$ where $d\mathbb{S}^2(\theta,\varphi)$ is the canonical metric of the unit $2$-sphere. By K\"unneth formula -- \cite[Chap. 1, \S 5]{BoTu}, $H^2(M)=\oplus_{p+q=2} H^p(\mathbb{R}\times (\frac{\pi}{4},\frac{3\pi}{4}))\otimes H^q(\mathbb{S}^2)$ which is non trivial since $H^2(\mathbb{S}^2)=\mathbb{R}$. Let us now consider as $M'$, the ultrastatic spacetime $\mathbb{R}\times\mathbb{S}^3$ whose metric coincides in a local chart to $ds^2$. It is manifest that $M$ is isometrically embedded in $M'$, but still K\"unneth formula entails that 
$H^2(\mathbb{R}\times\mathbb{S}^3)=\oplus_{p+q=2}H^p(\mathbb{R})\times H^q(\mathbb{S}^3)$. Since $\mathbb{R}$ is contractible, only $q=2$ contributes and therefore the second cohomology group of $\mathbb{R}\times\mathbb{S}^3$ is trivial.

Let us now consider in the framework outlined above $\omega\in\Omega^2_{0,\delta}(M)$, then $\widehat{\bf F}(\omega)$ lies in the centre of $\mathfrak{F}_u(M)$ thanks to proposition \ref{centre}. Under the isometric embedding $\mu:M\hookrightarrow M'$, one obtains $\alpha_\mu\left(\widehat{\bf F}(\omega)\right)=\widehat{\bf F}(\mu_*\omega)$. Yet, since $\mu_*$ commutes with $\delta$, $\mu_*\omega$ is coclosed and since $H^2(M')$ is trivial, there exists $\lambda\in\Omega^3_0(M')$ such that $\mu_*\omega=\delta\lambda$. This entails that $\widehat{\bf F}(\mu_*\omega)=\widehat{\bf F}(\delta\lambda)=d\widehat{\bf F}(\lambda)=0$ on account of Maxwell's equation. Barring a minor generalization, this entails that every element of the centre of $\mathfrak{F}_u(M)$ is mapped into (the equivalence class of) $0$ in $\mathfrak{F}_u(M')$. This is tantamount to claim that $\alpha_\mu$ cannot be an injective $*$-homomorphism, injectivity failing to be achieved. 

\section{Conclusions}

In this paper we have developed a full-fledged quantization scheme for the field strength tensor obeying Maxwell's equations. Since we wanted to keep the discussion as general as possible we have neither used the vector potential as an auxiliary tool nor we have assumed the compactness of the Cauchy surface of the underlying globally hyperbolic spacetime $M$. This forced us to use two-forms $F$ obeying \eqref{Maxwell} as the building block of the theory; we have shown in particular that it still possible to construct a field algebra whose generators obey the commutation relations provided by the Lichnerowicz propagator. 
Yet we have also proven that the overall procedure does not fit in the scheme of general local covariance as developed in \cite{BFV} since there exist spacetimes $M$ with $H^2(M,\mathbb{C})\neq\{0\}$. In this case the universal algebra $\mathfrak{F}_u(M)$ possesses a non trivial centre whose elements have been fully characterized in proposition \ref{centre}. Nonetheless it is possible to conceive that $M$ is isometrically embedded in a second globally hyperbolic spacetime $M'$ which has a trivial second de Rham cohomology group and thus the associated field algebra has a trivial centre. This translates in the failure of the homomorphism from $\mathfrak{F}_u(M)$ into $\mathfrak{F}_u(M')$ from being injective and thus the embedding translates in a loss of a qualitative feature of the field algebra of $M$ when seen from $M'$, such as the presence of superselection sectors as first discussed in \cite{Ash80}. 

As we have proven in the previous section, a potential way out is to restrict the class of spacetimes we consider and general local covariance is restored as soon as we assume to work only with backgrounds with vanishing second de Rham cohomology group. Yet it is fair to admit that the situation is rather puzzling: On the one hand the proposed solution would discard spacetimes, such as Schwarzschild, which are certainly of physical relevance, while on the other hand the requirement that $H^2(M,\mathbb{C})=\{0\}$ vanishes entails that all field strength tensors would descend from a vector potential. This feature is certainly desirable as soon as we want to move from a free field theory to an interacting one such as quantum electrodynamics where the spinor fields are known to interact via $A\in\Omega^1(M)$ rather than via the Faraday tensor. 

Yet we feel it is still early to claim we have a total loss: As a matter of fact, if we focus on any equivalence class $[F]\in H^2(M)$, we are considering all elements of the form $F+dA$ where $A\in\Omega^1(M)$ while $F\in\Omega^2(M)$. In other words each non trivial cohomology class is composed of two parts. The first, is responsible for qualitative features such as global topological charges or, from the quantum perspective, for the identification of a specific superselection section and, hence, it is strictly tied to the specific chosen spacetime. The second is instead tied to a $1$-form, a sort of vector potential, and it is well-suited both to discuss interactions and to apply the principle of general local covariance. Although we are aware that this is simply a remark which does not necessarily solve all problems we have at hand, we still feel it is a starting point for further investigations which is worth to consider in detail. 

\section*{Acknowledgements}
The work of C.D. is supported partly by the University of Pavia and by the Indam-GNFM project {``Stati quantistici di Hadamard e radiazione di Hawking da buchi neri rotanti''}. We gratefully acknowledge the kind hospitality of the II. Institute f\"ur Theoretische Physik of the University of Hamburg during the realization of part of this work. We are also indebted for enlightening discussions and comments with Klaus Fredenhagen, Thomas-Paul Hack, Valter Moretti, Nicola Pinamonti, Jan Schlemmer as well as with the whole LQP research group in Hamburg. We are grateful to Andrey Saveliev for drawing the picture present in the main body of the text. The content and result of this paper are partly inspired to those present in the diploma thesis of B.L., \cite{Lang}. The details of the construction of the universal algebra were worked out in York and we would like to thank Chris Fewster for numerous discussions, in particular for his suggestion to consider disconnected spacetimes as well in order to realise that the universal algebra obeys the principle of locality.
%===============================================================================================================================
%===============================================================================================================================
%===============================================================================================================================

%===============================================================================================================================
%Bibliography===================================================================================================================
%===============================================================================================================================

%===============================================================================================================================
%===============================================================================================================================
%===============================================================================================================================
%===============================================================================================================================
\end{document}